\newtheorem{theorem}{Theorem}[section] 
\newtheorem{remark}{Remark}
\newenvironment{talign*}
 {\csname align*\endcsname}
 {\endalign}
\newenvironment{talign}
 {\csname align\endcsname}
 {\endalign}
\crefname{talign}{}{}
\crefname{equation}{}{}
\newacronym{RKHS}{RKHS}{reproducing kernel Hilbert space}
\newacronym{MC}{MC}{Monte Carlo}
\newacronym{MCMC}{MCMC}{Markov chain Monte Carlo}
\newacronym{CLT}{CLT}{central limit theorem}
\newacronym{CF}{CF}{\emph{control functionals}}
\newacronym{IID}{IID}{independent and identically distributed}
\newacronym{CV}{CV}{\emph{control variate}}
\newacronym{vvCV}{vv-CV}{\emph{vector-valued control variate}}
\newacronym{vvRKHS}{vv-RKHS}{\emph{reproducing kernel Hilbert space of vector-valued functions}}
\newacronym{mvkernel}{mv-kernel}{\emph{matrix-valued reproducing kernel}}
\newacronym{vvfunctions}{vv-functions}{\emph{vector-valued functions}}
\newacronym{TI}{TI}{\emph{thermodynamic integration}}
\newacronym{BQ}{BQ}{Bayesian quadrature}
\DeclareMathOperator*{\argmin}{arg\,min}
\DeclareMathOperator*{\argmax}{arg\,max}
\def\S{\mathcal{S}}
\def\Svv{\mathcal{S}^{\text{vv}}}
\def\Ssv{\mathcal{S}^{\text{sv}}}
\def\Jvv{J^{\text{vv}}}
\def\D{\mathcal{D}}
\def\G{\mathcal{G}}
\def\V{\text{Var}}
\def\U{\mathcal{U}}
\def\H{\mathcal{H}}
\def\Lvv{L^{\text{vv}}}
\def\L{\mathcal{L}}
\def\N{\mathbb{N}}
\def\Nplus{\mathbb{N}_+}
\def\R{\mathbb{R}}
\def\X{\mathbb{R}^d}
\def\O{\mathcal{O}}
\def\EST{\hat{\Pi}}
\def\MC{\hat{\Pi}^{\text{MC}}}
\def\CV{\hat{\Pi}^{\text{CV}}}
\newcommand{\metric}[2]{\left< #1, #2 \right>} 
\def\defn{\equiv}
\icmltitlerunning{Vector-Valued Control Variates}
\begin{document}

\twocolumn[
\icmltitle{Vector-Valued Control Variates}




\begin{icmlauthorlist}
\icmlauthor{Zhuo Sun}{ucl}
\icmlauthor{Alessandro Barp}{cam,ati}
\icmlauthor{Fran\c{c}ois-Xavier Briol}{ucl,ati}
\end{icmlauthorlist}

\icmlaffiliation{ucl}{University College London, London, UK}
\icmlaffiliation{cam}{University of Cambridge, Cambridge, UK}
\icmlaffiliation{ati}{The Alan Turing Institute, London, UK}

\icmlcorrespondingauthor{Fran\c{c}ois-Xavier Briol}{f.briol@ucl.ac.uk}

\icmlkeywords{control variates, kernel methods, Monte Carlo integration, ICML}

\vskip 0.3in
]



\printAffiliationsAndNotice{}  

\begin{abstract}
Control variates are variance reduction tools for Monte Carlo estimators. They can provide significant variance reduction, but usually require a large number of samples, which can be prohibitive when sampling or evaluating the integrand is computationally expensive. Furthermore, there are many scenarios where we need to compute multiple related integrals simultaneously or sequentially, which can further exacerbate computational costs. In this paper, we propose \emph{vector-valued control variates}, an extension of control variates which can be used to reduce the variance of multiple Monte Carlo estimators \emph{jointly}. This allows for the transfer of information across integration tasks, and hence reduces the need for a large number of samples. We focus on control variates based on kernel interpolants and our novel construction is obtained through a generalised Stein identity and the development of novel matrix-valued Stein reproducing kernels. We demonstrate our methodology on a range of problems including multifidelity modelling, Bayesian inference for dynamical systems, and model evidence computation through thermodynamic integration.
\end{abstract}

\section{Introduction} 
 A significant computational challenge in statistics and machine learning is the approximation of intractable integrals. Examples include the computation of posterior moments, the model evidence (or marginal likelihood), Bayes factors, or integrating out latent variables. This challenge has lead to the development of a wide range of \gls{MC} methods; see \cite{Green2015} for a review. Let $f:\X\rightarrow \R$ denote some integrand of interest, and $\Pi$ some distribution with Lebesgue density $\pi$ known up to an intractable normalisation constant. The integration task we consider can be expressed as estimating 
 \begin{talign*}
    \Pi[f] := \int_{\X} f(x) \pi(x) \mathrm{d}x 
 \end{talign*}
 using evaluations of the integrand at some points in the domain: $\{x_i,f(x_i)\}_{i=1}^n$. These evaluations are usually combined to create an estimate of $\Pi[f]$ of the form $\EST[f] =  \frac{1}{n} \sum_{i=1}^n f(x_i)$. For example, when realisations are \gls{IID}, this corresponds to a \gls{MC} estimator. 
In that case, assuming that $f$ is square-integrable with respect to $\Pi$ (i.e. $ \Pi[f^2] < \infty$), we can use the \gls{CLT} to show that such estimators converge to $\Pi[f]$ as $n\rightarrow \infty$, and this convergence is then controlled by the asymptotic variance of the integrand $f$. Analogous results can also be obtained for \gls{MCMC} realisations \citep{Jones2004}, in which case $\{x_i\}_{i=1}^n$ are realisations from a Markov Chain with invariant distribution $\Pi$, or for randomised quasi-Monte Carlo \citep{Hickernell2005}, in which case $\{x_i\}_{i=1}^n$ form some lattice or sequence filling some hypercube domain.

The main insight behind the concept of \gls{CV} is that it is instead often possible to use an estimator of $\Pi[f-g]$ for some $g:\X \rightarrow \R$. This is justified if $\Pi[g]$ is known in closed form, in which case we may use $\CV[f] := \EST[f-g] +\Pi[g]$. 
Furthermore, if $g$ is chosen appropriately, the variance of the CLT for this new estimator will be much smaller than that of the original, and a smaller number of samples will be required to approximate $\Pi[f]$ at a given level of accuracy.

Suppose now that $n = (n_1,\ldots, n_T) \in \N^T$ ($T \in \Nplus$) is a multi-index. In this paper, we will focus on cases where we have not just one integral, but a sequence of integrands $f_t:\X\rightarrow \R$ and distributions $\Pi_t$ for which we would like to use $\{\{x_{tj},f_t(x_{tj})\}_{j=1}^{n_t}\}_{t=1}^T$ to estimate
\begin{talign}
\label{eq:multiple_integration_problem}
    \Pi_t[f_t] := \int_{\X} f_t(x) \pi_t(x) \mathrm{d}x \quad \text{for } t \in [T], 
\end{talign}
where $[T]=\{1, \ldots, T\}$.  This is a common situation in practice; for example, our paper considers the case of multifidelity modelling \citep{Peherstorfer2018} where $f_1,\ldots,f_T$ may be a computationally expensive physical model $f$, and we might be interested in expectations of that model with respect to unknown parameters. Another example we will also study is when $\pi_1,\ldots,\pi_T$ are closely related posterior distributions, such as in the case of power posteriors \citep{friel2008_marginal_llk_by_power_posterior}.

Of course, the estimation of integrals in  \Cref{eq:multiple_integration_problem} could be tackled individually, and this is in fact the most common approach. However, the main insight from this paper is that if both the integrands and distributions are related across tasks, we can improve on this by sharing computation across these tasks. We propose to construct a \gls{CV} to \emph{jointly reduce the variance} of estimators for these integrals and hence obtain a more accurate approximation. We will call such a function a \gls{vvCV}. In order to encode the relationship between integration tasks, we will propose a flexible class of \gls{CV}s based on interpolation in \gls{vvRKHS}. More precisely, we generalise existing constructions of Stein reproducing kernels to derive novel \gls{vvRKHS}s with the property that each output has mean zero.

We note that very few methods exist to tackle multiple integrals jointly. One exception is \cite{ICML2018_BQforMultipleRelatedIntegrals}, which also proposes an algorithm based on \gls{vvRKHS}s. However, that work is limited to cases where the integral of the kernel is known in closed-form, which is rarely possible in practice. In contrast, our \gls{vvCV}s are applicable so long as $\pi_t$ is known up to an unknown constant and $\nabla_x \log \pi_t$ can be evaluated pointwise for all $t \in [T]$ (where $\nabla_x = (\partial/\partial x_1, \ldots, \partial/\partial x_d)^\top$). This will usually be satisfied in Bayesian statistics, and is a requirement for the implementation of most gradient-based \gls{MCMC} algorithms.

The remainder is as follows. In Section \ref{sec:background}, we review existing \gls{CV}s based on Stein's method. In Section \ref{sec:methodology}, we introduce \gls{vvCV}s, then in Section \ref{sec:learning_vvCVs} we show how to find an optimal \gls{vvCV}. Finally, in Section \ref{sec:experiments}, we demonstrate the advantage of our approach on problems in multifidelity modelling, Bayesian inference for differential equations and model evidence computation through thermodynamic integration. 

\vspace{-4mm}

\paragraph{Notation} Vectors $x \in \R^d$ are column vectors, $\|x\|_q = (\sum_{i=1}^d x_i^q)^{1/q}$ for $q \in \N$, and $\pmb{1}_d =(1,\ldots,1)^\top \in \R^d$. For a multi-index $m\in \N^d$, we write $|m|=\sum_{i=1}^d m_i$ for its total degree. For a matrix $M \in \R^{p \times q}$, $M_{ij}$ denotes the entry in row $i$ and column $j$, $\|M\|_F^2 = \sum_{i=1}^p \sum_{j=1}^q M_{ij}^2$ is the Frobenius squared-norm, $\text{Tr}(M) = \sum_{i=1}^m M_{ii}$ is the trace, and $M^\dagger$ is the pseudo-inverse. $I_m$ denotes the $m$-dimensional identity matrix, and $S^m_+$ the set of symmetric strictly positive definite matrices in $\R^{m \times m}$. 
We denote by $C^j$ the set of functions whose mixed partial derivatives of order at most $j$ are continuous, and given a differentiable function $g$ on $\R^{d_1} \times \R^{d_2}$,
$\partial^r_x g(x,y)$ denotes its partial derivative in the $r^{th}$-coordinate of its first entry evaluated at $(x,y)$.


\section{Background}
\label{sec:background}

We now briefly review existing constructions for Stein-based \gls{CV}s for a single integration problem.

The first step consists of constructing a set of functions $\G$ which all integrate to a known value against $\Pi$. This can generally be challenging since $\pi$ may not be computationally tractable, e.g., involving an unknown normalisation constant. Without loss of generality, we will discuss the construction of functions which integrate to zero, but notice that we can obtain functions with mean equal to any constant $\beta \in \R$ by simply adding this constant $\beta$ to a zero-mean function.

\paragraph{Zero-Mean Functions through Stein Operators}

One way of constructing zero-mean functions is to use Stein's method  \citep{Anastasiou2021}. The main ingredients of Stein's method are a function class and an operator acting on this class. More precisely, a \emph{Stein class} of $\Pi$ is a class of functions $\U$ associated to an operator $\S$, called \emph{Stein operator}, such that a \emph{Stein identity} holds: $
\Pi[\S [u]] = 0$ $\forall u \in \U$. An obvious choice for the class of zero-mean functions $\G$ is to consider all functions of the form $g=\S[u]$ for $u\in \U$. To ensure such $g$ has finite variance, we assume that all functions in $\G$ are square-integrable with respect to $\Pi$. This can be guaranteed under weak regularity conditions on $\U$ and $\S$; see \Cref{thm:squared inegrable RKHS}. Note also that $\S$ depends implicitly on $\Pi$, but we only make this explicit in our notation (i. e. $\S_\Pi$)  when it is helpful for clarity. 

The most common choice of Stein operator is the \emph{Langevin Stein operator}, which acts on differentiable
\gls{vvfunctions} $u:\X \rightarrow \R^d$:
\begin{talign}\label{eq:Langevin_vv}
\L[u](x):= \nabla_x \cdot u(x) + u(x) \cdot \nabla_{x} \log \pi(x).
\end{talign}
The advantage of $\L$  is that it only requires knowledge of $\Pi$ through evaluations of $\nabla_x \log \pi$, which does not require the normalisation constant of $\pi$. Indeed, let $\pi = \tilde{\pi}/C$ for some unknown $C \in \R$, then $\nabla_x \log \pi = \nabla_x \log \tilde{\pi}$. For more general Stein operators, see \cite{Anastasiou2021}.

\paragraph{Parametric Spaces} 
$\G$ is usually chosen to be a parametric space and we will hence write it $\G_\Theta$, where $\Theta$ denotes the space of  parameter values.  Most existing \gls{CV}s can be obtained by taking $g_\theta = \S[u_\theta]$ for some $u_{\theta} \in \U_\Theta$ where $\U_\Theta$ is another parametric class. However, note that there might not be a unique $u_\theta$ leading to $g_\theta$. For the remainder of the paper, $\theta$ will hence be a parameter indexing $g_\theta$ directly as opposed to an element of the Stein class. Examples of parametric \gls{CV}s are the polynomial-based \gls{CV}s of \cite{mira2013_zerovariance_MCMC} (see also  \cite{assaraf1999_zerovariance_principleforMCI, papamarkou2014_zerovariance_diffgeoMCMC,oates2014_controlled_themo_integration}), in which case the Stein class is parametrised directly by coefficients of a polynomial. Using a space of neural networks has also been studied in \cite{zhu2018neural_CV,Si2020}. This later choice can be advantageous due to the flexibility of this function class, but is much more challenging to implement  because selecting a \gls{CV} becomes a non-convex problem.

Another example are kernel interpolants, which will be the main focus of our paper. This class is nonparametric, but it is often convenient to fix the dataset size and parametrise it. Let $\H_k$ denote a \gls{RKHS} with kernel $k:\X\times\X \rightarrow \R$ \citep{berlinet2011_RKHSinProbsStats}, so that $k$ is symmetric ($k(x,y) = k(y,x)$ $\forall x,y \in \X$) and positive semi-definite ($\forall m \in \Nplus$, $\sum_{i,j=1}^m c_i c_j k(x_i,x_j) \geq 0$  $\forall c_1,\ldots,c_m \in \R$ and $\forall x_1,\ldots,x_m \in \X$). The kernel could be a squared-exponential kernel $k(x,y) =\exp(-\|x-y\|_2^2/2l^2)$ with lengthscale $l>0$, or a polynomial kernel $k(x,y) = (x^\top y +c)^l$ where $c \in \R$ and $l\in \N$ is the degree of the polynomial. 
\citet{oates2017_CF_for_MonteCarloIntegration}  noticed that the image of $\U = \mathcal{H}_k^d := \mathcal{H}_{k} \times \ldots \times \mathcal{H}_{k}$ under $\L$ is a \gls{RKHS} with kernel
\begin{talign}\label{eq:k_0}
    k_0 (x, y) 
    &:=  \nabla_{x} \cdot \nabla_{y} k(x, y)  + \nabla_{x} \log \pi(x) \cdot \nabla_{y} k(x, y)   \nonumber\\
    &\quad +  \nabla_{y} \log \pi(y) \cdot \nabla_{x} k(x, y) \nonumber\\
    &\quad + ( \nabla_{x} \log \pi(x) \cdot  \nabla_{y} \log \pi(y))  k(x, y),
\end{talign}
see also Thm 2.6 \citet{barp2022targeted} for a more general result. Given $m$ observations, it is known that the optimal interpolant in $\H_{k_0}$ is of the form $g_\theta(x) = \sum_{i=1}^m \theta_i k_0(x,x_i)$ where $\theta_i \in \R$ for all $i \in [m]$. This therefore provides a natural parametrisation for practical implementation. \citet{oates2017_CF_for_MonteCarloIntegration} called this class of \gls{CV}s \gls{CF}; see also  \cite{Briol2017SMCKQ,oates2019convergence,south2020_Semi_Exact_CF} for more details.

\paragraph{Selecting a \gls{CV}} In order to select a \gls{CV}, we will pick the ``best'' element from $\G_\Theta$, where ``best'' will refer to minimising \gls{MC} variance:
\begin{talign}\label{eq:objective_singlecase}
    J(\theta) = \text{Var}_{\Pi}[ f- g_\theta] :=\Pi[(f-g_{\theta} - \Pi[f])^2],
\end{talign}
Following the framework of empirical risk minimisation, this can be approximated with $\{x_j,f(x_j)\}_{j=1}^m$ as follows:
\begin{talign*}
    J_m(\theta,\beta) & = \frac{1}{m} \sum_{j=1}^m (f(x_j) - g_\theta(x_j) - \beta)^2  + \lambda \|g_\theta\|^2,
\end{talign*}
where $\beta \in \R$ is an additional parameter which tends to $\Pi[f]$ as $m \rightarrow \infty$ and $\lambda \rightarrow 0$. Here,  $\lambda \geq 0$ is a regularisation parameter and $\|g_\theta\|$ can be any suitable norm. For example, $\|g_\theta\| = \|\theta\|_2$ or $\|g_\theta\| = \|g_\theta\|_{\H_k}$ for some kernel $k$. Assuming that $\Theta \subseteq \R^p$, this objective can then be minimised by the solution to a linear system when $\theta \mapsto g_\theta$ is linear and $\theta \mapsto \|g_\theta\|^2$ is quadratic. In more general cases, it can be minimised using stochastic optimisation \citep{Si2020}. In that case, we initialise $\theta^{(0)}$ and $\beta^{(0)}$, then iteratively take gradient steps with minibatches of size $\tilde{m} \ll m$.

There are two particular perspectives which motivate the objective in \eqref{eq:objective_singlecase}. Firstly, it can be interpreted as a least-squares objective for the function $f - \Pi[f]$ \citep{leluc2021_cv_selection_for_montecarlo_integration}. Secondly, by noticing that for any square-integrable function $h$ and \gls{MC} estimator with $n$ samples we have  $\text{Var}_\Pi[ \MC[h]] =\text{Var}_\Pi[h]/n $, we can notice that $J(\theta)$ fully determines the \gls{CLT} variance of a \gls{MC} estimator of $\Pi[f-g_\theta]$, which makes it particularly well suited for these estimators. This latter viewpoint has also motivated alternative objectives based on the variance of the randomised quasi-Monte Carlo or \gls{MCMC} \gls{CLT}; see e.g. \citet{Hickernell2005,Oates2016CFQMC,Dellaportas2012,Mijatovic2018}. The \gls{MCMC} case is briefly discussed in \Cref{appendix:alter_obj_MCMC}, but the remainder of the paper will focus on the objective in \eqref{eq:objective_singlecase} for simplicity.

\paragraph{The \gls{CV} Estimator}\label{sec:background3}

Recall that a \gls{CV} estimator takes the form $\CV[f] := \EST[f-g] +\Pi[g]$ for some \gls{CV} $g$. Once a function $g_{\hat{\theta}}$ has been selected through the procedure in the previous subsection, the only part missing is therefore an estimator for $\Pi[f-g_{\hat{\theta}}]$. We present two possible approaches below.
 A first option is to create an estimator based on the remainder of the data $\{x_i,f(x_i)\}_{i=m+1}^{m+n}$:
\begin{talign*}
    \EST[f-g_{\hat{\theta}}]  = \frac{1}{n} \sum_{i=m+1}^{m+n} (f(x_i) - g_{\hat{\theta}}(x_i)). 
\end{talign*}
This estimator is unbiased whenever $\EST[f-g_{\hat{\theta}}]$ is  unbiased. This is the case when using a \gls{MC} estimator, but not when using a \gls{MCMC} estimator. The question of how to select $m$ is of practical importance for the quality of the estimator. 
When $m$ is small, most of the function evaluations are used for the \gls{MC} estimator, whereas when $m$ is large, most of the evaluations are used to construct the \gls{CV}.

Due to the difficulty of choosing $m$, a second option is to use all of the data for selecting $g_{\hat{\theta}}$ (i.e. $n=0$). More precisely, denoting by $(\hat{\theta},\hat{\beta})$ the minimiser of $J_m(\theta,\beta)$, we could use $\EST[f-g_{\hat{\theta}}]  = \hat{\beta}$. This estimator will be biased, but will also be more accurate than the first estimator if the least-squares problem can be solved at a fast rate than the MC CLT.

\section{Methodology}
\label{sec:methodology}

We will now extend existing \gls{CV}s to the case where multiple related integration problems are tackled jointly. This will be done through a multi-task learning approach \citep{micchelli2005_On_Learning_Vectorvalued_functions,evgeniou2005_Learning_MultiTask_with_Kernelmethods}, where each integral corresponds to a task and the relationship between tasks will be modelled explicitly. This will allow us to share information across integration tasks, and hence improve accuracy when the number of integrand evaluations is limited.

\subsection{Vector-Valued Functions using Stein's Method}
\label{sec:general_framework_VecValued_CVs_case1}

For the remainder of this paper, we consider integrands corresponding to the outputs of a vector-valued function $f:\X \rightarrow \R^T$ so that $f(x) = (f_1(x), \ldots, f_T(x))^\top$. Our objective is to approximate $\Pi[f] = (\Pi_1[f_1],\ldots,\Pi_T[f_T])^\top$, and we shall assume that $f_t$ is square-integrable with respect to $\Pi_t$ $\forall t \in [T]$. Formally, $\Pi$ is known as a \emph{vector probability distribution}. 
To approximate $\Pi[f]$, we will construct a class of zero-mean \gls{vvfunctions} through Stein's method. This can be done by considering a Stein class $\U$ whose image $\G$ under the Stein operator $\Svv:\U \rightarrow \G$ is a class of $\R^T$-valued functions on  $\X$. 
We will need a generalised form of Stein identity for \gls{vvfunctions}:
\begin{talign}\label{eq:generalised_Stein_equation}
    \Pi_t[g_t] = \Pi_t[(\Svv[u])_t]=0 ,~~ \forall u \in \U \text{ and } \forall t \in [T].
\end{talign}
In other words, each output of the vv-function should integrate to zero against the corresponding probability distribution. Of course, the ordering of the sequence of integrands and distributions matters here, as we do not guarantee that $\Pi_{t}[g_{t'}]=0$ for $t \neq t'$. The property above can be obtained by constructing an operator $\Svv$ through a sequence of Stein operators $\Ssv_{\Pi_t}$ for $t \in [T]$ whose images are scalar-valued functions integrating to zero under $\Pi_t$. These can then be applied in an element-wise fashion as follows
\begin{talign}\label{eq:vvop_sep}
g = \Svv[u] = (\Ssv_{\Pi_1}[u_1],\ldots,\Ssv_{\Pi_T}[u_T])^\top.
\end{talign}
Once again, $\G$ can be parametrised and we will denote it $\G_\Theta$. We can then use an objective based on the variances $f-g_\theta$ to select an optimal element:
\begin{talign}\label{eq:Jvv}
    \Jvv(\theta)  = \| \V_\Pi[f-g_\theta] \|  &  = \|\Pi[(f - g_\theta - \Pi[f])^2 ]\|,
\end{talign}
where $g_\theta \in \G_\Theta$. In the above $\V_\Pi$ should be thought of as applying $\V_{\Pi_t}$, the variance under $\Pi_t$, to the $t^\text{th}$ element of the vv-function. The norm could be any norm on $\R^T$, but we will usually make use of the $1$-norm so as to interpret this objective as the sum of variances on each integrand.
For this objective to make sense, we require $(g_\theta)_t$ to be squared-integrable with respect to $\Pi_t$ $\forall t \in [T]$. Similarly to the $T=1$ case, we are focusing on a least-squares objective, which directly controls the variance of \gls{MC} estimators, but this could be adapted to other estimators as previously discussed.

Let $m = (m_1,\ldots,m_T) \in \Nplus^T$.
Once again, the objective can be approximated via \gls{MC} estimates based on the dataset $\D=\{\{x_{1j},f_1(x_{1j})\}_{j=1}^{m_1},\ldots,\{x_{Tj},f_T(x_{Tj})\}_{j=1}^{m_T}\}$ following the framework of empirical risk minimisation. For example, when the norm above is a $1$-norm, the objective is simply the sum of individual variances:
\begin{talign}\label{eq:Lvv_m}
   &\Lvv_m(\theta,\beta) :=  \Jvv_m(\theta,\beta) + \lambda \|g_\theta\|^2  \\
    & = \sum_{t=1}^T  \frac{1}{m_t} \sum_{j=1}^{m_t} (f_t(x_{tj}) - (g_\theta(x_{tj}))_t - \beta_t)^2 + \lambda \|g_\theta\|^2,\nonumber
\end{talign}
where $\lambda \geq 0$ and now $\beta = (\beta_{1},\ldots,\beta_{T})\in \R^T$. Once again, the second term is used to regularise $\Jvv_m$, but the norm acts on \gls{vvfunctions}. 
For $\U$, we could take a class of polynomials, kernels or neural networks, but will usually require flexible classes of \gls{vvfunctions} in order to encode relationships between tasks. Assuming that each $\Pi_t$ has a $C^1$ and strictly positive density $\pi_t$ with respect to the Lebesgue measure, we will also be able to use the Langevin Stein operators $\L$, which only require access to the score functions for each distribution $\Pi_1, \ldots, \Pi_T$. For this purpose, we now define $l: \X \rightarrow \R^{T \times d}$ to be the matrix-valued function with entries $l_{ij} (x) = \partial^j \log \pi_i(x)$.

\subsection{Kernel-based Vector-Valued \gls{CV}s}
\label{sec:detailed_construction_VecValued_CVs_Case1}

The main choice of Stein class $\U$ studied in this paper is \gls{vvRKHS}s \citep{carmeli2005_VectorValuedRKHS_MercerTheorem,Carmeli2010,Alvarez2011}. This choice is particularly convenient as it allows us to build on the rich literature in statistical learning theory which considers kernels encoding relationships between tasks. A main contribution of this section will be the design of novel Stein reproducing kernels specifically for numerical integration, a task not commonly tackled in statistical learning theory.

A \gls{vvRKHS} $\H_K$ is a Hilbert space of functions mapping from $\X$ to $\R^T$ with an associated \gls{mvkernel} $K:\X\times \X \rightarrow \R^{T \times T}$ which is symmetric ($K(x,y) = K(y,x)^\top$ $\forall x,y \in \X$) and positive semi-definite ($\forall m \in \Nplus$, $ \sum_{i,j=1}^{m} c_i^\top K(x_i, x_j) c_j \ge 0$ for all $c_1,\ldots,c_m \in \R^T$ and $x_1,\ldots,x_m \in \X$).
Any \gls{vvRKHS} satisfies a reproducing property, so that $\forall f \in \mathcal{H}_K$, $f(x)^\top c = \langle f, K(\cdot, x) c \rangle_{\mathcal{H}_K}$ and $K(\cdot, x)c \in \mathcal{H}_{K}$  $\forall x \in \X$ and $\forall c \in \R^{T}$. The reproducing kernels discussed in \Cref{sec:background} are a special case which can be recovered when $T=1$. 
We say that $K$ is  $C^{r,r}(\X \times \X)$ provided 
 that $\partial^{\alpha}_x \partial^{\alpha}_y K(x,y)$ is continuous for all multi-indices $\alpha = (\alpha_1,\ldots, \alpha_d)$ with $ \alpha_1 + \cdots + \alpha_d  \leq r$, 
and that $K$ is bounded with bounded derivatives if there exists $C\geq 0$ for which  $\| \partial_x^\alpha \partial_y^\alpha K(x,y) \|_F \leq C $ for all $x,y \in \R^d$ and multi-indices $\alpha$ with $\alpha_1+\cdots + \alpha_d  \leq 1$.

To construct \gls{vvCV}s, a natural approach is to construct a \gls{mvkernel} $K_0$ using another \gls{mvkernel} $K$ and an operator $\Svv$. 
Then, assuming we have access to such a $K_0$ and to data $\D$, a natural generalisation of the scalar valued case is:
\begin{talign}\label{eq:closed_form_vvCV}
g_\theta(x) = \sum_{t=1}^{T} \sum_{j=1}^{m_t}  K_0(x,x_{tj}) \theta_{tj},  
\end{talign}
where $\theta_{tj} \in \R^T $, for all $t \in [T], j \in [m_t]$.
The remainder of this section will hence focus on how to obtain $K_0$.
Our construction is based on the \gls{CF}s of \cite{oates2017_CF_for_MonteCarloIntegration,oates2019convergence}, which use a scalar-valued kernel $k_0$ obtained through a tensor product structure $\U = \H_k^d$. The initial motivation for introducing \gls{vvfunctions} in this setting was that $\L$ requires inputs which are \gls{vvfunctions} which can be thought of as introducing a dummy dimension for convenience, and is in no way related to the multitask setting in this paper. We now illustrate the natural extension of $k_0$ to a \gls{mvkernel}.

\begin{theorem}\label{thm:K0_kernel_firstorder}
Consider $\mathcal{H}_K$ which is a \gls{vvRKHS} with \gls{mvkernel} $K:\X \times \X \rightarrow \R^{T \times T}$, and suppose that $K \in C^{1,1}(\X \times \X)$.
Furthermore, assuming $u = (u_1,\ldots,u_T) \in \mathcal{H}_K$, let $
\Svv[u]= (\L_{\Pi_1}[u_1], \ldots, \L_{\Pi_T}[u_T])^\top$.
Then, the image of $\mathcal{H}_K^d = \mathcal{H}_K \times \ldots \times \mathcal{H}_K$ under $\Svv$ is a \gls{vvRKHS} with kernel $K_0:\X \times \X \rightarrow \R^{T \times T}$ with:
\begin{talign*}
&(K_0(x,y))_{tt'} 
 = \sum_{r=1}^d \partial^r_x \partial^r_y K(x,y)_{tt'}+ l_{t'r}(y)\partial^r_x K(x,y)_{tt'} \\
& \qquad \qquad + l_{tr}(x) \partial_y^r K(x,y)_{tt'} + l_{tr}(x) l_{t'r}(y) K(x,y)_{tt'} ,
\end{talign*}
\end{theorem}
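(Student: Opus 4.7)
The plan is to recast the claim as an instance of the image-RKHS principle: for any bounded linear operator $S$ from a Hilbert space $\mathcal{H}$ into a space of $\R^T$-valued functions on $\X$ with continuous evaluations $u \mapsto c^\top (Su)(x)$, the image $S(\mathcal{H})$ is a vv-RKHS whose mv-kernel is obtained by applying $S$ to the Riesz representer in $\mathcal{H}$ of each such evaluation functional. First, I would view $\mathcal{H}_K^d$ as the Hilbert direct product with inner product $\langle u,v\rangle_{\mathcal{H}_K^d} = \sum_{r=1}^d \langle u^{(r)},v^{(r)}\rangle_{\mathcal{H}_K}$, and identify $u \in \mathcal{H}_K^d$ with the matrix-valued function whose $(t,r)$-entry is $u^{(r)}_t$. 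The $t$-th row $U_t : \X \to \R^d$ is what $\L_{\Pi_t}$ acts on, giving $\Svv[u]_t(x) = \sum_{r=1}^d \bigl(\partial^r u^{(r)}_t(x) + l_{tr}(x)\, u^{(r)}_t(x)\bigr)$.

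Next, I would invoke the standard differential reproducing identities that hold in a vv-RKHS with a $C^{1,1}$ mv-kernel: for every $x \in \X$, $c \in \R^T$ and $r \in [d]$, the elements $K(\cdot,x)c$ and $\partial^r_2 K(\cdot,x)c$ (where $\partial^r_2$ denotes differentiation in the $r$-th coordinate of the second argument of $K$) lie in $\mathcal{H}_K$ and represent the functionals $u \mapsto c^\top u(x)$ and $u \mapsto c^\top \partial^r u(x)$ respectively. Substituting these into the formula for $c^\top \Svv[u](x)$ and gathering terms by the $r$-slot of $\mathcal{H}_K^d$ displays $u \mapsto c^\top \Svv[u](x)$ as $\langle u, \phi_{x,c}\rangle_{\mathcal{H}_K^d}$, where $\phi_{x,c} \in \mathcal{H}_K^d$ has $r$-th component
\[
\phi^{(r)}_{x,c} \;=\; \sum_{t=1}^T c_t \bigl(\partial^r_2 K(\cdot,x)\, e_t + l_{tr}(x)\, K(\cdot,x)\, e_t\bigr).
\]
In particular every such functional is continuous and $\phi_{x,c}$ is its Riesz representer.

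Finally, I would apply the image-RKHS principle (essentially the content of Thm 2.6 of Barp et al.\ 2022 cited in the excerpt) to conclude that $\Svv(\mathcal{H}_K^d)$ is a vv-RKHS whose mv-kernel satisfies $K_0(y,x)\, c = \Svv[\phi_{x,c}](y)$, and then derive the stated formula by expanding $\Svv[\phi_{x,c}](y)_{t'}$ using linearity of the derivative and reading off the $(t',t)$ entry after relabelling $(y,x,t',t) \leftrightarrow (x,y,t,t')$. The conceptual content is all in the Riesz-representer identification above; the main obstacle is purely bookkeeping --- four indices (the input coordinate $r$, output coordinates $t$ and $t'$, and whether a given derivative hits the first or second argument of $K$) must be tracked consistently, and one must justify that derivatives commute with the vv-reproducing inner product, which follows from $K \in C^{1,1}(\X\times\X)$ via dominated convergence on the relevant difference quotients.
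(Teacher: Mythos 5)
Your proposal is correct and follows essentially the same route as the paper's proof: the element $\phi_{x,c}$ you construct is exactly the adjoint feature map $\gamma(x)^*c$ used in the paper (via the differential reproducing identity for $C^{1,1}$ matrix-valued kernels), and your appeal to the image-RKHS principle with $K_0(y,x)c = \Svv[\phi_{x,c}](y)$ is the same step the paper carries out by citing Proposition 1 of Carmeli et al.\ (2010) for the feature operator $\Svv:\H_K^d\to\H_{K_0}$. The remaining index bookkeeping you describe matches the paper's expansion, so no gap remains.
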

\noindent The proof is in \Cref{appendix:proof_K0_general}. Notice that $(K_0(x,y))_{tt'} $ depends only on the score function of $\Pi_t$ and $\Pi_{t'}$, and so we (once again) do not require knowledge of normalisation constants. However, in order to evaluate $K_0(x,y)$ for some $x,y \in \X$, we will require pointwise evaluation of $\nabla_{x} \log \pi_t(x)$ and $\nabla_{y} \log \pi_t(y)$ for all $t \in [T]$. Finally, another interesting point is that $(K_0(x,y))_{tt'}$ is a scalar-valued kernel when $t=t'$, but this is not the case for $t \neq t'$ since it is not symmetric in that case. 

To use elements of this RKHS as \gls{vvCV}s, we will require that the least-squares objective in \Cref{eq:Jvv} is well-defined. This can be guaranteed when the elements of the RKHS are square-integrable, and the theorem below, proved in \Cref{app:proof square integrable RKHS}, provides sufficient conditions for this to hold.
\begin{theorem}\label{thm:squared inegrable RKHS}
Suppose that $K$ is bounded with bounded derivatives, and 
$ \Pi_t[\| \nabla_x \log \pi_t \|_2^2] < \infty$ for all $t \in [T]$. Then, for any $g \in\H_{K_0}$, $g_t$ is square-integrable with respect to $\Pi_t$ for all $t \in [T]$.
\end{theorem}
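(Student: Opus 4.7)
The plan is to combine the reproducing property of $\H_{K_0}$ with the closed form for $K_0$ from \Cref{thm:K0_kernel_firstorder}, reducing square-integrability of $g_t$ to integrability of the diagonal entries of $K_0(x,x)$ against $\Pi_t$.

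First I would derive a pointwise bound on $|g_t(x)|$. Writing $e_t$ for the $t$-th standard basis vector of $\R^T$, the reproducing property of $\H_{K_0}$ gives
\begin{talign*}
g_t(x) = g(x)^\top e_t = \langle g, K_0(\cdot,x) e_t\rangle_{\H_{K_0}},
\end{talign*}
and a second application yields $\|K_0(\cdot,x) e_t\|_{\H_{K_0}}^2 = e_t^\top K_0(x,x) e_t = (K_0(x,x))_{tt}$. Cauchy--Schwarz then gives the key estimate
\begin{talign*}
|g_t(x)|^2 \leq \|g\|_{\H_{K_0}}^2 \cdot (K_0(x,x))_{tt},
\end{talign*}
so it suffices to show $\Pi_t[(K_0(\cdot,\cdot))_{tt}] < \infty$.

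Next I would expand $(K_0(x,x))_{tt}$ via \Cref{thm:K0_kernel_firstorder} into four types of terms indexed by $r\in[d]$: the second-order term $\partial^r_x\partial^r_y K(x,y)_{tt}|_{y=x}$, two mixed-derivative cross terms of the form $l_{tr}(x)\partial^r_x K(x,y)_{tt}|_{y=x}$ and $l_{tr}(x)\partial^r_y K(x,y)_{tt}|_{y=x}$, and the zeroth-order term $l_{tr}(x)^2 K(x,x)_{tt}$. The bounded-derivative hypothesis, together with $|M_{tt'}|\leq \|M\|_F$, immediately bounds the second-order and zeroth-order terms by the constant $C$ uniformly in $x$.

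The first-order mixed terms are not directly covered by the hypothesis, and this is the only mild obstacle. I would handle them by Cauchy--Schwarz inside $\H_K$: since $K \in C^{1,1}$, both $K(\cdot,x)e_t$ and $\partial^r_y K(\cdot,y)e_t$ lie in $\H_K$, with squared norms $K(x,x)_{tt} \leq C$ and $\partial^r_x\partial^r_y K(y,y)_{tt} \leq C$ respectively. Then
\begin{talign*}
|\partial^r_y K(x,y)_{tt}| = |\langle K(\cdot,x)e_t, \partial^r_y K(\cdot,y)e_t\rangle_{\H_K}| \leq C,
\end{talign*}
uniformly in $x,y$, and likewise for $\partial^r_x K(x,y)_{tt}$ by symmetry.

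Combining, there is a constant $C'$ depending on $C$ and $d$ such that
\begin{talign*}
(K_0(x,x))_{tt} \leq C'\bigl(1 + \|\nabla_x \log \pi_t(x)\|_1 + \|\nabla_x \log \pi_t(x)\|_2^2\bigr).
\end{talign*}
Integrating against $\Pi_t$ and bounding $\Pi_t[\|\nabla_x\log\pi_t\|_1] \leq \sqrt{d}\,\Pi_t[\|\nabla_x\log\pi_t\|_2]\leq \sqrt{d}\,\Pi_t[\|\nabla_x\log\pi_t\|_2^2]^{1/2}$ by Cauchy--Schwarz (or Jensen), both the $L^1$ and $L^2$ score-function moments are finite by hypothesis. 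Hence $\Pi_t[(K_0(\cdot,\cdot))_{tt}] < \infty$ and the conclusion follows.
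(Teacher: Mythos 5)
Your proposal is correct and follows essentially the same route as the paper: both reduce the claim to showing $\Pi_t[(K_0(\cdot,\cdot))_{tt}]<\infty$ and then bound the diagonal of $K_0$ using the boundedness of $K$ and its derivatives together with the second-moment condition on the score, the only difference being that you prove the pointwise domination $|g_t(x)|^2 \leq \|g\|_{\H_{K_0}}^2 (K_0(x,x))_{tt}$ directly via the reproducing property and Cauchy--Schwarz, where the paper instead cites Proposition 1 of Carmeli et al.\ (to place $g_t$ in the scalar RKHS with kernel $(K_0)_{tt}$) and Theorem 4.26 of Steinwart and Christmann (trace integrability implies square-integrability). A small bonus of your write-up is the explicit Cauchy--Schwarz bound in $\H_K$ showing $|\partial^r_y K(x,y)_{tt}|\leq C$, which justifies the boundedness of the one-sided derivative terms that the paper's stated hypothesis (bounds on $\partial_x^\alpha\partial_y^\alpha K$ only) covers only implicitly.
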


Now the \gls{mvkernel} in  \Cref{thm:K0_kernel_firstorder} takes a very general form as it has minimal requirements on $K$ or $\Pi_1,\ldots,\Pi_T$. This is convenient as it can be applied in a wide range of settings, but this generality comes at the cost of computational complexity. We will now study several special cases which will often be sufficient for applications.

\paragraph{Special Case I: Separable kernel $K$}  For simplicity, the literature on multi-task learning often focuses on the case of \emph{separable kernels}. We say a \gls{mvkernel} $K$ is separable if it can be written as $K(x,y) = B k(x,y)$, where $k$ is a scalar valued kernel and $B\in S^T_+$. The advantage of this formulation is that it decouples the model for individual outputs (as given by $k$) from the model of their relationship, as given by the components of the matrix $B$, which can be thought of as a covariance matrix for tasks. As we will see in Section \ref{sec:learn_B_jointly_case1}, this can be particularly advantageous for selecting the hyperparameters of \gls{vvCV}s. Using such a kernel $K$, the kernel $K_0$ in \Cref{thm:K0_kernel_firstorder} becomes:
\begin{talign*}
(K_0(x,y))_{tt'} & = B_{tt'} \sum_{r=1}^d \partial^r_x \partial^r_y k(x,y)+ l_{t'r}(y)\partial^r_x k(x,y) \nonumber\\
& \quad + l_{tr}(x) \partial_y^r k(x,y) + l_{tr}(x)l_{t'r}(y) k(x,y) ,
\end{talign*}
for $\forall t,t' \in [T]$. This expression is interesting because it reduces the choice of $K$ to the choice of a matrix $B$ and a kernel $k$, and this matrix $B$ has a natural interpretation in that $B_{tt'}$ denotes the covariance between $f_t$ and $f_t'$. See \Cref{appendix:illustration_mvKernel} for illustrations of such $K_0$.

\paragraph{Special Case II: Separable kernel $K$ with one target distribution} A further simplification of the kernel is possible when $K$ is separable and all distributions are the same: $\Pi_1=\cdots=\Pi_T\defn \Pi$. In this case $K_0$ itself becomes a separable kernel of the form: $
    (K_0(x,y))_{tt'} = B_{tt'} k_0(x,y)$ $\forall t,t' \in [T]$,
where $k_0$ is given in \Cref{eq:k_0}. The scalar case can then be recovered by taking $T=1$ and $B=1$.

\paragraph{Selecting a Base Kernel and Alternative Constructions} The base kernel $K$ needs to be chosen by the user. As for the scalar case, we expect that the performance of our approach will depend on whether the smoothness of $K_0$ closely matches the smoothness of the integrand $f$, and the smoothness of $K$ should therefore be chosen accordingly. We also suggest selecting the hyperparameters of $K$ through either cross validation or through maximisation of the log-marginal likelihood of a zero-mean multi-output Gaussian process with kernel $K_0$; see \Cref{appendix:hyperparameters} for details. Furthermore, see \Cref{appendix:alternative_constructions} for alternative constructions, including \gls{vvCV}s derived from the second-order matrix-valued Stein kernels and \gls{vvCV}s based on other parametric spaces such as polynomials and neural networks.

\section{Selecting a Vector-Valued \gls{CV}}\label{sec:learning_vvCVs}

We will now derive both a closed-form expression for the optimal parameters of these kernel-based \gls{vvCV}s and a stochastic optimisation scheme to approximate it.

\paragraph{Closed-form Solutions} \label{sec:closed_form_vvCV}

We start with a theorem which provides a result akin to the \gls{RKHS} representer theorem, but which focuses specifically on the function which minimises the objective in \Cref{eq:Lvv_m}. This theorem shows that there exists a unique parameter minimising the variance objective. 
\begin{theorem}\label{thm:closed_form_minimiser}
Let $\D=\{\{x_{1j},f_1(x_{1j})\}_{j=1}^{m_1}, \ldots, \{x_{Tj},f_T(x_{Tj})\}_{j=1}^{m_T}\}$. The function which minimises the objective in \Cref{eq:Lvv_m}  where $\|g_\theta\| := \|g_\theta\|_{\H_{K_0}}$ and $\beta \in \R^T$ is of the form:
\begin{talign*}
g_{\theta}(x) =  \sum_{t=1}^{T} \sum_{j=1}^{m_t} \theta_{tj}^\top K_0(x,x_{tj}) ,  \theta_{tj} \in \R^T 
\end{talign*}
with optimal parameter $\theta^*$ given by the solution of this convex linear system of equations:
\begin{talign*}
\small &\sum_{t'} \sum_{j'} \Big( \sum_{t}\frac{1}{m_t}   \sum_{j}  
     K_0(x_{t''j''},x_{tj})_{\cdot t}  K_0(x_{tj},x_{t'j'})_{t \cdot} \\
     & \qquad \qquad + \lambda  K_0(x_{t''j''},x_{t'j'}) \Big) \theta_{t'j'}^* \\
    &=\sum_{t} \frac{1}{m_t} \sum_{j}   K_0(x_{t''j''
},x_{tj})_{\cdot t} (f_t(x_{tj})-\beta_t), 
\end{talign*}
$\forall t'' \in [T], j'' \in [m_T]$.

Furthermore, if $K_0$ is strictly positive definite and the points $x_{tj}$ are distinct, then the system is strictly convex and $\theta^*$ is unique.
\end{theorem}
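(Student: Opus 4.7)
The plan is to apply a vv-RKHS representer theorem to reduce to a finite-dimensional optimisation, compute the first-order conditions explicitly to recover the stated linear system, and verify strict convexity to obtain uniqueness.

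For the representer reduction, I would show that any minimiser of $\Lvv_m$ lies in the finite-dimensional subspace $V := \mathrm{span}\{K_0(\cdot, x_{tj})\, c : c \in \R^T,\, t \in [T],\, j \in [m_t]\}$. Decomposing an arbitrary $g \in \H_{K_0}$ orthogonally as $g = g_\parallel + g_\perp$ with $g_\parallel \in V$, the vv-RKHS reproducing property $(g(x_{tj}))_t = \langle g, K_0(\cdot, x_{tj}) e_t \rangle_{\H_{K_0}}$ (with $e_t$ the $t$-th canonical basis vector of $\R^T$) shows that $g_\perp$ does not affect $\Jvv_m(\theta,\beta)$, while the Pythagoras identity $\|g\|^2_{\H_{K_0}} = \|g_\parallel\|^2_{\H_{K_0}} + \|g_\perp\|^2_{\H_{K_0}}$ shows the regulariser strictly decreases unless $g_\perp = 0$. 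Hence every minimiser admits the parametrisation $g_\theta(\cdot) = \sum_{t,j} K_0(\cdot, x_{tj})\, \theta_{tj}$ with $\theta_{tj} \in \R^T$.

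Substituting this parametrisation, and using the reproducing property together with the transpose identity $K_0(y,x) = K_0(x,y)^\top$, the regulariser becomes $\|g_\theta\|^2_{\H_{K_0}} = \sum_{t,j,t',j'} \theta_{tj}^\top K_0(x_{tj}, x_{t'j'})\, \theta_{t'j'}$ and the component evaluations read $(g_\theta(x_{tj}))_t = \sum_{t',j'} K_0(x_{tj}, x_{t'j'})_{t \cdot}\, \theta_{t'j'}$. Differentiating $\Lvv_m$ with respect to $\theta_{t''j''}$ and setting the gradient to zero yields
$$\lambda \sum_{t',j'} K_0(x_{t''j''}, x_{t'j'})\, \theta_{t'j'} = \sum_t \frac{1}{m_t} \sum_j K_0(x_{t''j''}, x_{tj})_{\cdot t} \bigl(f_t(x_{tj}) - (g_\theta(x_{tj}))_t - \beta_t\bigr).$$
Substituting the expansion of $(g_\theta(x_{tj}))_t$ and collecting the $\theta^*$ terms on the left produces exactly the displayed linear system.

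For strict convexity, $\Lvv_m$ is a sum of squared affine functions of $(\theta, \beta)$ plus $\lambda\,\theta^\top G \theta$, where $G := [K_0(x_{tj}, x_{t'j'})]_{(t,j),(t',j')}$ is the block-Gram matrix. When $K_0$ is strictly positive definite and the points $x_{tj}$ are distinct, $G \succ 0$ by the very definition of a strictly positive definite matrix-valued kernel, so for $\lambda > 0$ the Hessian of $\Lvv_m$ in $\theta$ is strictly positive definite. Joint strict convexity in $(\theta,\beta)$ then follows because any direction $(d_\theta, d_\beta)$ annihilating the full Hessian must satisfy $d_\theta = 0$ (from the regulariser contribution) and then $d_{\beta_t} = 0$ for each $t$ (since the residual-annihilation condition reduces to $d_{\beta_t} = 0$ once $d_\theta = 0$). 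Uniqueness of the minimiser follows. The main bookkeeping obstacle is the asymmetry of $K_0$ as a matrix-valued map: only $K_0(y,x) = K_0(x,y)^\top$ holds, not $K_0(x,y) = K_0(y,x)$, so care is needed to correctly pair column-slices $K_0(\cdot,\cdot)_{\cdot t}$ with row-slices $K_0(\cdot,\cdot)_{t \cdot}$; a minor additional point is that the argument implicitly uses $\lambda > 0$, since without regularisation the fitting-term Hessian alone need not be full rank in $\theta$.
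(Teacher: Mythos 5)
Your proposal is correct and follows essentially the same route as the paper's proof: a representer-type reduction to $g_\theta(\cdot)=\sum_{t,j}K_0(\cdot,x_{tj})\theta_{tj}$ (which you establish directly via the orthogonal-decomposition argument, where the paper instead cites Ciliberto et al.\ (2015)), followed by substitution, first-order conditions yielding the displayed linear system, and uniqueness from strict positive definiteness of the block Gram matrix (the paper phrases this via injectivity and the rank theorem, but it is the same fact). Your closing remark that uniqueness genuinely requires $\lambda>0$, since the data-fit Hessian alone has rank at most $\sum_t m_t < T\sum_t m_t$, is a fair and correct refinement of the paper's statement.
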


See \Cref{appendix:proof optimal function} for the proof. \Cref{thm:closed_form_minimiser} assumes that $\beta$ is known and fixed, which may not be the case in practice.
However,  given a fixed $g_\theta$ the objective \Cref{eq:Lvv_m} is a quadratic in $\beta$ with
the optimal value $\beta^*_t = \frac{1}{m_t} \sum_{j=1}^{m_t}  f_t(x_{tj})- (g_{\theta}(x_{tj}))_t$. This naturally leads to the use of block coordinate descent approaches which iterate between optimising $\beta$ and $\theta$. This could be either directly implemented using the closed-form solutions, or through the use of numerical optimisers such as the stochastic optimisation approaches we will now present. The next paragraph highlights how this can be implemented for \emph{special case I and II}. We remark that the use of stochastic optimisation tools will be essential in most applications due to the size of the linear systems leading exact solutions to being intractable in practice.

\begin{algorithm}[tb]
\caption{Block-coordinate descent for \gls{vvCV}s  with unknown task relationship}
\label{alg:vvCV_block} 
\begin{algorithmic}
   \STATE {\bfseries Input:} $\D, \tilde{m}, L, \lambda, \beta^{(0)}, \theta^{(0)}$ and $B^{(0)}$
   \FOR{iteration $l=1$ {\bfseries to} $L$}
   \STATE Select a mini-batch $\D_{\tilde{m}}$ of size $\tilde{m} = (\tilde{m}_1, \ldots, \tilde{m}_T)^\top$
   \STATE  $\big(\theta^{(l)},\beta^{(l)} \big) \leftarrow \textsc{Update}_{\theta,\beta} (\theta^{(l-1)}, \beta^{(l-1)}, B^{(l-1)}; \D_{\tilde{m}}\big)$
    \STATE  $B^{(l)} \leftarrow \textsc{Update}_B \big( \theta^{(l)},\beta^{(l)}, B^{(l-1)};\D_{\tilde{m}}\big)$.
   \ENDFOR
   \STATE {\bfseries Return:} $\theta^{(L)}$, $\beta^{(L)}$ and $B^{(L)}$
\end{algorithmic}
\end{algorithm}

\paragraph{Unknown Task Relationship} 
\label{sec:learn_B_jointly_case1}

We now extend our approach to account for simultaneously estimating $\beta$, $\theta$, but also the matrix $B$. In this setting, we will use the same objective as in  \Cref{eq:Lvv_m}, but penalised by the norm of $B$:
\begin{talign}
\label{eq:vv_CV_learnB_1D1P_raw_obj}
    \bar{L}^{\text{vv}}_m(\theta, \beta, B)  = \Jvv_m(\theta,\beta,B) + \lambda\|g_\theta\|^2+ \|B\|^2,
\end{talign}
We now use $\Jvv_m(\theta,\beta,B)$ to denote $\Jvv_m(\theta,\beta)$ in order to emphasise the dependence on $B$. 
A second regularisation parameter is unnecessary as this would be equivalent to rescaling $k$. The objective in \Cref{eq:vv_CV_learnB_1D1P_raw_obj} is a natural extension to \Cref{eq:Lvv_m} and can be straightforwardly minimised through stochastic optimisation; see Algorithm \ref{alg:vvCV_block}. To ensure $B$ is strictly positive definite, we can take $B = L L^\top$, where $L$ is a lower triangular matrix with diagonal elements forced to be greater than zero via an exponential transformation. The pseudo-code in Algorithm \ref{alg:vvCV_block} presents this abstractly as a function $\textsc{Update}$. This is because different choices of \gls{vvCV}s might benefit from different updates. For example, pre-conditioners for the gradients could be used when readily available, or when these can be estimated from data. In \Cref{sec:experiments}, we will exclusively be using the Adam optimiser \citep{Kingma2015}, a first-order method with estimates of lower-order moments. Additionally, we study the convex case when $B$ is known and only $\beta$ and $\theta$ are required to be estimated in \Cref{appendix:fixedB}.

\paragraph{Computational Complexity}
Although \gls{vvCV}s can be beneficial from an accuracy viewpoint, they also incur a significant computational cost. Whether they should be used will therefore depend on the computational budget available. In particular, when evaluations of $f$ or $\nabla \log \pi$  are expensive, the higher cost of using \gls{vvCV}s may be negligible.  \Cref{tab:computation_complex} provides the computational complexity of all approaches considered in the paper. We emphasise the impact of $T$, the number of tasks. In the case of existing kernel-based \gls{CV}s, the dependence is $\O(T)$. In contrast, the computational cost of \gls{vvCV}s is between $\O(T^4)$ and $\O(T^6)$. \Cref{tab:computation_complex} also highlights the difference in computational complexity between obtaining closed form solutions of $\theta$ and $\beta$  by solving the linear system of equations in \Cref{thm:closed_form_minimiser}, and using the stochastic-optimisation approaches in \Cref{appendix:fixedB} and 
\Cref{sec:learn_B_jointly_case1}. Here, the difference is mainly in terms of powers of $T$, $m_t$ and $\tilde{m}_t$. As we can see the exact solutions usually come with an $\O(m_t^3)$ cost, whereas the stochastic optimisation approach is associated with a $\O(m_t \tilde{m}_t L)$ cost. In this case, whenever $\tilde{m}_t L$ is small relative to $m_t$, this will lead to computational gains.

In all applications considered, both $m_t$ and $T$ were small so the overall cost is controlled. However, this computational complexity can be further significantly reduced in special cases. When using a kernel corresponding to a finite-dimensional RKHS (e.g. a polynomial kernel), the scaling becomes linear in $m_t$, but is $\O(q^3)$ instead of $\O(m_t^3)$, where $q \ll m_t$ is the dimensionality of the RKHS. Alternatively, for certain choices of point sets and kernels, it is possible to reduce the computational complexity to $\O(m_t \log m_t)$ instead of $\O(m_t^3)$ by using scalable kernel methods such as fast Fourier features or inducing points. When the integrands are evaluated at the same set of points and the separable kernel is used, the computational cost in $T$ also becomes $\O(T^2)$ instead of $\O(T^6)$, once again significantly reducing the computational complexity.

\begin{table}[t]
\caption{Computational complexity of kernel-based \gls{CV}s and \gls{vvCV}s as a function of $d, m, \tilde{m}, L$ and $T$. We assume that $m_t$ is the same $ \forall t \in [T]$ up to additive or multiplicative constants (and similarly for all $\tilde{m}_t$ with $t \in [T]$). The cost of stochastic optimisation algorithms is assumed to only scale with the cost of stochastic estimates of the gradient of $\Jvv$. }
\label{tab:computation_complex}
\begin{center}
\begin{sc}
\scalebox{0.85}{
\begin{tabular}{lcccr}
\toprule
\textit{Method} & CV & vv-CV \\
\hline 
\textit{Exact solution} & $\O((d m_t^2 + m_t^3)T) $ & $\O(d m_t^2 T^4 + m_t^3 T^6)$  \\
\hline
\textit{Stochastic optim.} & $\O(d \tilde{m}_t m_t L T)$ & $\O(d\tilde{m}_t m_t L T^4)$ \\
\bottomrule
\end{tabular}
}
\end{sc}
\end{center}
\vskip -0.15in
\end{table}

\section{Experimental Results}
\label{sec:experiments}

We now illustrate our method on a range of problems including multi-fidelity models, computation of the model evidence for dynamical systems through thermodynamic integration and Bayesian inference for the abundance of preys using a Lotka-Volterra system. See \Cref{appendix:additional_experiments} for additional experiments including illustrations of matrix-valued Stein kernels $K_0$ in \Cref{appendix:illustration_mvKernel} and a synthetic example when the Stein kernel matches the smoothness of integrands in \Cref{appendix:south_experiments}. Since we are interested in gains obtained from the \gls{CV}s, we fix $n=0$ which means we are using all the data to construct \gls{vvCV}s. The code to reproduce our results is available at: \url{https://github.com/jz-fun/Vector-valued-Control-Variates-Code}.

\subsection{Multidelity Modelling in the Physical Sciences}\label{sec:multifidelity_modelling}

Many problems in the engineering and physical sciences can be tackled with multiple models of a single system of interest. These models are often associated with varying computational costs and levels of accuracy, and their combination to solve a task is usually called multi-fidelity modelling; see \cite{Peherstorfer2018} for a review. 
We will consider a high-fidelity model $f_H$ and a low-fidelity model $f_L$, and will attempt to estimate the integral of $f_H$ with our \gls{vvCV}s and using function evaluations from both the high- and low-fidelity models. For clarity, we will now denote the function $f=(f_L,f_H)$ and the vector-probability distribution $\Pi = (\Pi_L,\Pi_H)$. We note that this is a special case of the problem considered in our paper since we use evaluations of multiple functions but are only interested in $\Pi_H[f_H]$ (whereas $\Pi_L[f_L]$ is not of interest).

\begin{figure}[t!]
\begin{center}
\centering
\includegraphics[width=\columnwidth,trim=0cm 0cm 25.8cm 0cm, clip=true]{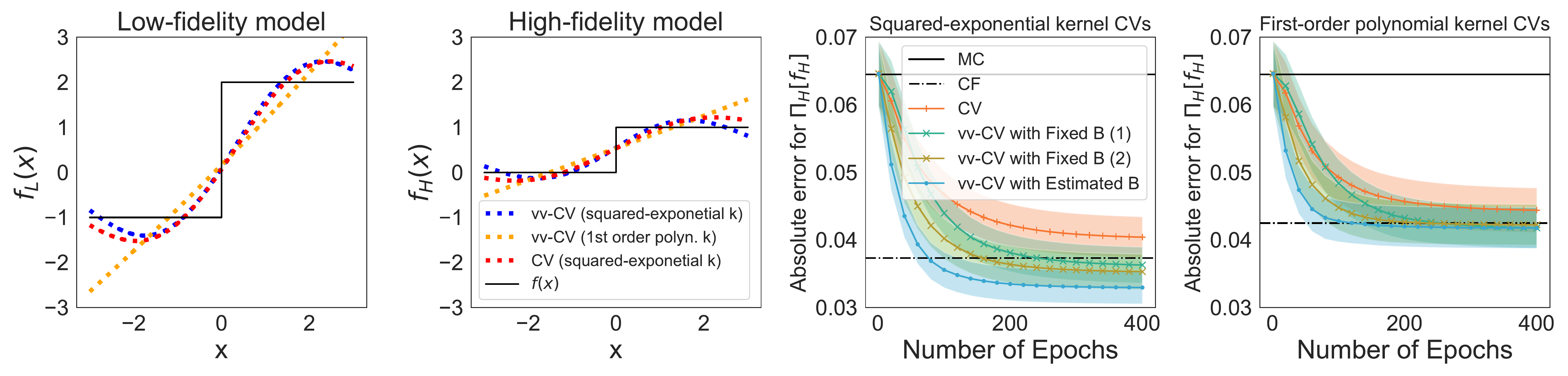}
\includegraphics[width=\columnwidth,trim=25cm 0cm 0cm 0cm, clip=true]{IMAGES/step_function_plot.pdf}
\caption{ \emph{Numerical integration of univariate discontinuous multifidelity model.} 
\emph{Upper}: fitted CVs for both functions. \emph{Lower Left:} performance of \gls{CV}s based on a squared-exponential kernel for the high-fidelity function as a number of epochs of the optimisation algorithm. The lines provide the mean over $100$ repetitions of the experiment, whereas the shaded areas provide one standard deviation above and below the mean. 
\emph{Lower Right:} same experiment for a polynomial kernel for the high-fidelity function.  
} 
\label{fig:multifidelity_univariate}
\end{center}
\vskip -0.3in
\end{figure}

\vspace{-0.1in}
\paragraph{Univariate Step Function} The first example considered is a toy problem from the multi-fidelity literature \citep{ICML2018_BQforMultipleRelatedIntegrals}. The low-fidelity function is $f_{\text{L}}(x) = 2~\text{if}~x \ge 0$ and $-1~\text{otherwise}$. The high-fidelity function is $f_{\text{H}}(x) = 1~\text{if}~x \ge 0$ and $0~\text{otherwise}$. In this example, $K_0$ is smoother than $f_1$ and $f_2$, which are both discontinuous.  The integral is over the real line and taken against $\Pi = \mathcal{N}(0,1)$, and we fix the sample sizes to $m = (m_{\text{L}},m_{\text{H}})=(40,40)$. 

Results with a squared-exponential and  1$^{\text{st}}$ order polynomial kernel $k$ can be found in \Cref{fig:multifidelity_univariate}.  The upper plots clearly show that the approximations are not of very high-quality, but the lower plots show that all \gls{CV}s can still lead to an order of two gain in accuracy over \gls{MC} methods. We also observe that \gls{vvCV}s can lead to further gains over existing \gls{CV}s by leveraging evaluations of $f_L$. For both kernels, we provide three different versions of the \gls{vvCV}s with separable structure to highlight the impact of the matrix $B$. The first two cases use Algorithm \ref{alg:SGD_vvCV} with a fix value of $B$. In the first instance, $B_{11} =  B_{22} = 0.1, B_{12}=B_{21} = 0.01$, whereas in the second instance $B_{11} = B_{22} = 0.5, B_{12} = B_{21} = 0.01$. The third case is based on estimating $B$ through Algorithm \ref{alg:vvCV_block}. Clearly, $B$ can have a significant impact on the performance of the \gls{vvCV}, and estimating a good value from data can provide further gains. The choice of $k$ is also significant: all \gls{CV}s based on the squared-exponential kernel significantly outperform the \gls{CV}s based on a 1$^{\text{st}}$ order polynomial kernel. It is also found that even when the model is mis-specified, the proposed method still perform better than standard scalar-valued \gls{CV}s.

\paragraph{Modelling of Waterflow through a Borehole} A more complex example often used to assess multifidelity methods is the following model of water flow passing through a borehole \citep{Xiong2013,Kandasamy2016,Park2017}. Both $f_L$ and $f_H$ have $d=8$ inputs representing a range of parameters influencing the geometry and hydraulic conductivity of the borehole, as well as transmissivity of the aquifer. Prior distributions have been elicited from scientists over input parameters to account for uncertainties about their exact value. See \Cref{appendix:multifidelity_modelling} for the details of each input and the multi-fidelity models. One quantity of interest here is the expected water flow under these  distributions, and we hence have $\Pi_L=\Pi_H$. 

\vspace{-0.1in}
\begin{table}[h]
\caption{\emph{Expected values of the flow of water through a borehole}. The numbers provided give the mean absolute integration error for $100$ repetition of the task of estimating $\Pi_H[f_H]$, and the numbers in brackets provide the sample standard deviation. To provide the absolute error, the true value ($72.8904$) is estimated by a \gls{MC} estimator with $5\times 10^5$ samples.}
    \label{tab:borehole_balance_case}
\vskip 0.15in
\begin{center}
\begin{sc}
\scalebox{0.75}{
\begin{tabular}{c|| cc cc }
\toprule
        $m$ &  \gls{vvCV}- Est. B  &  \gls{vvCV}-Fix. B & CF & MC\\ \hline \hline
         $10$ & 3.72 (0.27)  & \textbf{1.94} (0.15) & 2.24 (0.16) & 6.42 (0.44)  \\ \hline
         $20$ & \textbf{1.29} (0.10)  & 1.35 (0.10) & 1.96 (0.10)  & 4.31 (0.31)  \\ \hline
         $50$ & \textbf{1.04} (0.06)  & 1.77 (0.12) & 1.76 (0.07)  & 2.63 (0.17)  \\ \hline
         $100$ & \textbf{1.07} (0.06)  & 1.65 (0.14) & 1.71 (0.05)  & 1.83 (0.15)  \\ \hline
         $150$ & \textbf{0.85} (0.05)  & 1.30 (0.09) & 1.67 (0.04)  & 1.42 (0.10)  \\
    \bottomrule
\end{tabular}
}
\end{sc}
\end{center}
\vskip -0.1in
\end{table}

\noindent Results of our simulation study are presented in \Cref{tab:borehole_balance_case}. We compare a standard \gls{MC} estimator with a kernel-based \gls{CV} fitted with a closed form solution (denoted \gls{CF}) and two kernel-based \gls{vvCV}s corresponding to special case II in \Cref{sec:methodology}. The first with $B_{11} = B_{22} = 5 \times 10^{-4}$ and $B_{12} = B_{21} = 5 \times 10^{-5}$, and the second with $B$ estimated using Algorithm \ref{alg:vvCV_block}. The kernel used is a tensor product of squared-exponential kernels with a separate lengthscale for each dimension. Clearly, \gls{vvCV}s significantly outperform \gls{MC} in the large majority of cases, and estimating $B$ can lead to significant gains over using a fixed $B$. The worst performance for \gls{vvCV}s with estimated $B$ is when values of $m$ are the lowest. This is because $m$ is not large enough to learn a good $B$.  See \Cref{appendix:multifidelity_modelling} for further details.


\subsection{Model Evidence for Dynamical Systems}

We now consider Bayesian inference for non-linear differential equations such as dynamical systems, which can be particularly challenging due to the need to compute the model evidence. This is usually a computationally expensive task since sampling from the posterior repeatedly requires the use of a numerical solver for differential equations which needs to be used at a fine resolution. 

\begin{figure}[ht]
\vskip 0.in
\begin{center}
\centerline{\includegraphics[width=1\columnwidth, trim={0.1cm 0.cm 0.1cm 0.cm},clip]{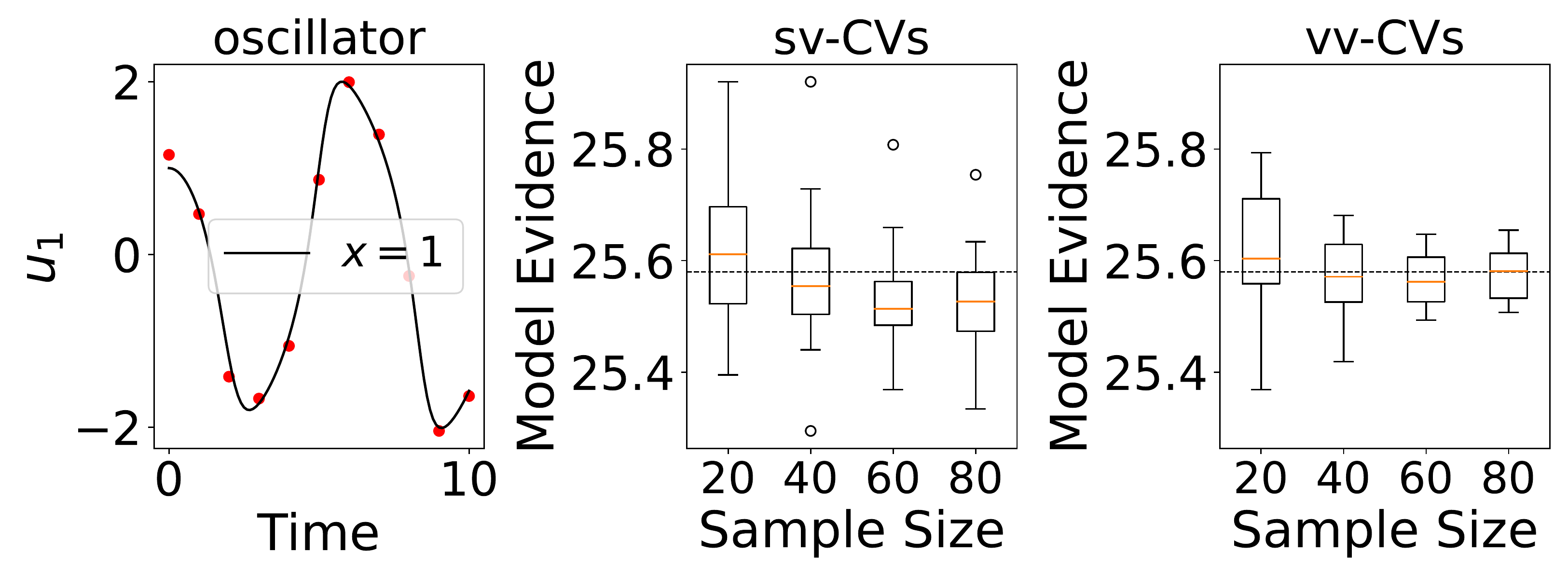}}
\caption{\emph{Model evidence computation through thermodynamic integration.} \emph{Left:} Illustration of the van der Poll oscillator model (black line) and corresponding observations (red dots). \emph{Center:} Estimates of the model evidence as a function of the number of posterior samples for kernel-based \gls{CV}s. The box-plots were created by repeating the experiment $20$ times and the black line gives an estimate of the truth obtained from \cite{oates2017_CF_for_MonteCarloIntegration} ($25.58$). \emph{Right:} Same experiment but with kernel-based \gls{vvCV}s. }
    \label{fig:thermodynamic_integration}
\end{center}
\vskip -0.3in
\end{figure}

In \cite{Calderhead2009}, the authors propose to use \gls{TI} \citep{friel2008_marginal_llk_by_power_posterior} to tackle this problem, and \cite{oates2014_controlled_themo_integration,oates2017_CF_for_MonteCarloIntegration} later showed that \gls{CV}s can lead to significant gains in accuracy in this context.  \gls{TI} introduces a path from the prior $p(x)$ to the posterior $p(x|y)$, where $y$ and $x$ represent the observations and the unknown parameters respectively. This is accomplished by the power posterior $p(x|y, t) \propto p(y|x)^t p(x)$, where $t \in [0,1]$ is called the inverse temperature. When $t = 0$, $p(x|y, t) = p(x)$, whereas when $t=1$, $p(x|y, t) = p(x|y)$. The standard \gls{TI} formula for the model evidence has a simple form  which can be approximated using second-order quadrature over a discretised temperature ladder $0 = t_1 \le \cdots \le t_w =1$ \cite{friel2014_improving_power_posterior}. It takes the following form
\begin{talign*} 
  &\log p(y) = \int_0^1 \left[\int_\mathcal{X} \log p(y|x) p(x|y, t) d \theta\right] dt \\
  &\qquad \approx \sum_{i=1}^{w} \frac{t_{i+1} - t_i}{2} ( \mu_{i+1} + \mu_i) - \frac{(t_{i+1} - t_i)^2}{12} (v_{i+1} - v_i) ,
\end{talign*}
where $\mu_i$ is the mean and $v_i$ the variance of the integrand $f(x) := \log p(y|x)$ with respect to $\pi_i(x) := p(x|y, t_i)$. To estimate $\{\mu_i,v_i\}_{i=1}^{w}$, we need to sample from all power posteriors on the ladder, then use a \gls{MCMC} estimator which can be enhanced through \gls{CV}s. This gives $T=2w$ integrals which are \emph{related}: $w$ integrals to compute means and $w$ integrals to compute variances, each against different power posteriors. As we will see, this structure will allow \gls{vvCV}s to provide significant gains in accuracy.

Our experiments will focus on the \emph{van der Poll oscillator}, which is an oscillator $u:\R_+ \times \mathcal{X} \rightarrow \R$ (where $x \in \R$) given by the solution of $d^2 u/ ds^2 - x (1- u^2) du/ds + u = 0$, where $s$ represents the time index. For this experiment, we will follow the exact setup of \cite{oates2017_CF_for_MonteCarloIntegration} and  transform the equation into a system of first order equations: 
$\nicefrac{d u_1}{ds} = u_2, \nicefrac{d u_2}{ds} = x(1-u_1^2)u_2-u_1$, which can be tackled with ODE solvers. Our data will consist of noisy observation of $u_1$ (the first component of that system) given by $y(s) = \mathcal{N}(u_1(s; x), \sigma^2)$ with $\sigma=0.1$ at each point $s \in \{0,1,\ldots,10\}$; see the left-most plot of \Cref{fig:thermodynamic_integration} for an illustration. We will take a ladder of size $w = 31$ with $t_i = ((i-1)/30)^5$ for $i \in \{1,\ldots,31\}$. This gives a total of $T=62$ integrals will need to be computed simultaneously, which is likely to be too computationally expensive for \gls{vvCV}s in their full generality. As a result, we chose $B$ to be a block diagonal matrix which puts integrands in groups of $4$ means or $4$ variances (except one group of 3 for mean and variance). To sample from the power-posteriors, we use population \gls{MC} with the manifold Metropolis-adjusted Langevin algorithm \citep{Girolami2011}. Due to the high computational cost of using ODE solvers, our number of samples will be limited to less than $100$ per integrand and this number will be the same for each integration task.

Our results are presented in \Cref{fig:thermodynamic_integration}. 
The kernel parameters were taken to be identical to those in \cite{oates2017_CF_for_MonteCarloIntegration}.
As observed in the centre plot, kernel-based \gls{CV}s provide relatively accurate estimates of the model evidence. As the sample size increases, we notice less variability in these estimates, but the central $50\%$ of the runs are contained in an interval which excludes the true value. In comparison, the right-most plot shows that kernel-based \gls{vvCV}s can provide significant further reduction in variance. The distribution of estimates is also much more concentrated and centered around the true value.

\subsection{Bayesian Inference of Lotka-Volterra System}

We now consider another model: the Lotka-Volterra system \citep{lotka1925elements, volterra1926variazioni, lotka1927fluctuations} of ordinary differential equations.  This system is given by: $\nicefrac{d v_1(s)}{ds} = \alpha v_1(s)-\beta v_1(s) v_2(s), \nicefrac{d v_2(s)}{ds} = \delta v_1(s) v_2(s)-\gamma v_2(s)$.
Here, $s \in [0,S]$ for some $S \in \R_+$ denotes the time, and $v_1(s)$ and $v_2(s)$ are the numbers of preys and predators, respectively. The system has initial conditions $v_1(0)$ and $v_2(0)$.
We have access to noisy observations of $v = (v_1,v_2)$ at points $s_1,\ldots,s_m \in [0,S]$ denoted  $y_{1j},y_{2j}$ and which are both observed with log-normal noise with standard deviation $\sigma_{y_1}$ and $\sigma_{y_2}$ respectively for all $j \in \{1,\ldots,m\}$, given some unknown parameter value $x^* =(\alpha^*,\beta^*, \delta^*, \gamma^*, v_1(0)^*, v_2(0)^*, \sigma_{y_1}^*, \sigma_{y_2}^*)^\top$. In practice, we reparameterise $x$ such that the model parameters are defined in $\R^8$; see \Cref{appendix:Lotka} for details. Given these observations, we can construct a posterior $\Pi$ on the value of $x^*$. We will then be interested in computing posterior expectations of $v_1$ at a set of time points $s'_1,\ldots,s'_T$, and hence have $T$ integrands of the form $f_t(x) = v_1(s_t';x)$ where $x$ highlights the dependence on the parameter $x$. \gls{CV}s were previously considered for individual tasks in this context by \citet{Si2020}. However, these $T$ integrands are related when $s'_1,\ldots,s'_T$ are close to each other.

 We use the dataset of snowshoe hares (preys) and Canadian lynxes (predators) from \citet{hewitt1921conservation}, and implement Bayesian inference on model parameters $x$ by using no-U-turn sampler (NUTS) in Stan \citep{carpenter2017stan}. For sv-CVs, we estimate each individual $\Pi[f_t]$ separately; while for \gls{vvCV}s, we estimate a collection these tasks $\Pi[f] := (\Pi[f_1], \ldots, \Pi[f_T])^\top$ jointly. See \Cref{appendix:Lotka} for experimental details. 
In \Cref{tab:Lotka2}, we compare kernel \gls{vvCV}s (special case II) with standard MCMC estimators and CF estimators. We consider two cases of \gls{vvCV}s: the first is the case when $B$ is with $B_{tt} = 5 \times 10^{-4}$  for all $t$ and $B_{tt'} = 5 \times 10^{-5}$ for $\forall t \neq t'$, and the second with $B$ estimated using Algorithm 2. The kernel used is a tensor product of squared-exponential kernels with a separate lengthscale for each dimension. We increase the number of tasks from $T=2$ to $T=10$. Once again, \gls{vvCV}s significantly outperforms MCMC, especially for large $T$, and estimating $B$ provides further gains over using a fixed $B$.

\begin{table}[t]
\caption{\emph{Posterior Expected Abundance of Preys.}  The numbers provided give the sum of the mean absolute integration error for $10$ repetition of each task. To provide the absolute error, the true values of the associated expectations are estimated by \gls{MCMC} estimators with $8\times 10^5$ posterior samples.}
    \label{tab:Lotka2}
\vskip -0.1in
\begin{center}
\begin{small}
\begin{sc}
\scalebox{0.82}{
\begin{tabular}{c c|| cc  cc  cc  cc}
\toprule
$T$  & $m$ &  \gls{vvCV}- Est. B  & \gls{vvCV}-Fix. B   & CF & MCMC \\ \hline \hline
        $2$  & $500$ &  0.462 & \textbf{0.404}  & 0.666 & 0.568  \\        \hline
        $5$ & $500$  & \textbf{0.393}  &  0.419  & 0.521 &  0.987    \\ \hline
        $10$  & $500$  & \textbf{0.938}   & 1.031 & 2.540  &   2.663\\
\bottomrule
\end{tabular}
}
\end{sc}
\end{small}
\end{center}
\vskip -0.25in
\end{table}


\section{Conclusion}

This paper considered variance reduction techniques that share information across related integration problems. The proposed solution, \emph{vector-valued control variates}, was shown to lead to significant variance reduction for problems in multi-fidelity modelling and Bayesian computation. Our approach is, to the best of our knowledge, the first algorithm able to perform multi-task learning for numerical integration by using only evaluation of the score functions of the corresponding target distributions. It is also the first algorithm which can simultaneously learn the relationship between integrands and provide estimates of the corresponding integrals without the requirement of a tractable kernel mean as in \citep{ICML2018_BQforMultipleRelatedIntegrals}.

On an algorithmic level, further work will be needed to make the method more computationally practical and efficient. One particular line of research which could be considered is how special cases of our matrix-valued Stein kernels in \Cref{thm:K0_kernel_firstorder} can be selected to reduce the computational cost whilst still producing a rich class of \gls{vvCV}s. Since a preprint version of this paper appeared online, it has been shown that approaches based on meta-learning could be competitive for very large $T$; see \citet{Sun2023}.

On a theoretical level, we could also look at the question of when transferring information across tasks will lead to sufficient gains in accuracy to warrant the additional computational cost.  In addition, it could be of interest to understand the negative transfer problems which can arise when the integrands are not in \gls{RKHS} $\H_{K_0}$ or when the sample size is too small to estimate $B$ well.

\section*{Acknowledgements}
The authors would like to thank Chris J. Oates for helpful discussions and for sharing some of his code for the thermodynamic integration example. 
ZS was supported under the EPSRC grant [EP/R513143/1]. 
AB was supported by the Department of Engineering at the University
of Cambridge, and this material is based upon work supported by, or in part by, the U.S. Army Research Laboratory and the U. S. Army Research Office, and by the U.K. Ministry of Defence and under the EPSRC grant [EP/R018413/2].
FXB was supported by the Lloyd’s Register Foundation Programme on Data-Centric Engineering and The Alan Turing Institute under the EPSRC grant [EP/N510129/1], and through an Amazon Research Award on
“Transfer Learning for Numerical Integration in Expensive Machine Learning Systems”.


\bibliography{example_paper}
\bibliographystyle{icml2023}

\newpage
\appendix
\onecolumn

{
\begin{center}
\LARGE
    \vspace{5mm}
    \textbf{Appendix}
    \vspace{5mm}
\end{center}
}

We now complement the main text with additional details. Firstly, we provide additional methodology in \Cref{appendix:additional_method}, including alternative objectives based on the variance of \gls{MCMC}, stochastic optimisation algorithm for \gls{vvCV}s when the task relationship $B$ is known (under special case I and II) and the calculation of the computational complexity that is presented in \Cref{tab:computation_complex}. Secondly, in \Cref{appendix:proofs}, we provide the proofs of all the theoretical results in the main text. Then, in \Cref{appendix:alternative_constructions}, we present alternative constructions of \gls{vvCV}s. These constructions include a more general form of kernel-based \gls{vvCV}s as well as some polynomial-based \gls{CV}s. Finally, in \Cref{appendix:implementation}
 and \Cref{appendix:additional_experiments}, we provide additional details on our implementation of \gls{vvCV}s and provide additional numerical experiments.


\section{Additional Methodology}
\label{appendix:additional_method}
In this first appendix, we will present additional methodology for \Cref{sec:methodology}. We briefly discuss an alternative objective based on the variance of the MCMC case in \Cref{appendix:alter_obj_MCMC}. The selection of \gls{vvCV}s under \emph{special case I and II} when $B$ is known and the corresponding algorithm is presented in \Cref{appendix:fixedB}.

\subsection{Alternative Objectives based on the Variance of MCMC}
\label{appendix:alter_obj_MCMC}

We present an alternative objective based on the variance of MCMC here, which is given by the following remark.
\begin{remark}\label{remark:MCMC_objective}
Given $\{X_i\}_{i=1}^{\infty}$ an ergodic Markov chain with invariant distribution $\Pi$, the variance of the \gls{MCMC} \gls{CLT} is proportional to 
\begin{talign*}
    J(\theta) + 2\sum_{i=1}^{\infty} \text{Cov}_{\Pi}[ f(X_1)-g_{\theta}(X_1), f(X_i)-g_{\theta}(X_i)] 
\end{talign*}
where the second term accounts for the correlation in the Markov chain. Given some realisation of the Markov chain  $\{x_j\}_{j=1}^m$ and the corresponding evaluations $\{f(x_j)\}_{j=1}^m$, this can be approximated as:
\begin{talign*}
       J_m(\theta,\beta) +2\sum_{s=1}^{m-1} \frac{1}{m}\sum_{i=1}^{m-s}(f(x_i) - g_{\theta}(x_i) -\MC[f-g_{\theta}])   (f(x_{i+s}- g_{\theta}(x_{i+s}) - \MC[f-g_{\theta}])).
\end{talign*}
\end{remark}

\subsection{Stochastic Optimisation with a Known Task Relationship} \label{appendix:fixedB}
In this section, we discuss the selection of \gls{vvCV}s under \emph{special case I and II} when $B$ is known and give the corresponding algorithm in Algorithm~\ref{alg:SGD_vvCV}.

In this case, we will assume that $B \in S_+^T$ is known. The proposed algorithm is a stochastic optimisation algorithm which we run for $L$ time steps; pseudo-code is provided in Algorithm \ref{alg:SGD_vvCV}. We propose to initialise the algorithm at $\theta^{(0)}=(0,\ldots,0) \in \R^p$, since this is equivalent to having $g_\theta(x) = 0$ (i.e. having no CV) for both the kernel- and polynomial-based \gls{vvCV}s. We also suggest initialising the parameter $\beta \in \R^T$ with any estimate of $\Pi[f]$. This is a natural initialisation since we expect $\beta_t$ to equal $\Pi[f_t]$ for all $t \in [T]$ when $m_1,\ldots,m_T \rightarrow \infty$. For example, when the data consists of \gls{IID} realisations from $\Pi_1,\ldots, \Pi_T$, a natural initialisation point is $\beta^{(0)} = (\MC_1[f_1],\ldots,\MC_T[f_T])^\top$. 

For each iteration, we take mini-batches of size $\tilde{m} \in \N^T$ where $|\tilde{m}| \leq |m|$. Note here that $\tilde{m}=(\tilde{m}_1,\ldots,\tilde{m}_T)^\top$ is a multi-index giving the size of the mini-batch for each of the $T$ datasets. This formulation allows for the use of different datasets across integrands, but also different mini-batch sizes for each task (which may be useful if the datasets are of different size for each integrand). An epoch consists of having gone through all data points for all $T$ tasks, and we randomly shuffle the indices for mini-batches after each epoch. As default, we propose to take $\tilde{m}_t  \propto m_t /(\sum_{t=1}^T m_t)$ for all $t \in [T]$. This choice guarantees that the number of samples for each integrand in the mini-batches is proportional to the proportion of samples for that integrand in the full dataset.

Once a mini-batch has been selected, we update our current estimate of the parameters $\theta$ and $\beta$ using steps based on the gradient of our loss function: $\nabla_{(\theta,\beta)} \Lvv_{\tilde{m}}(\theta,\beta)$. The pseudo-code in Algorithm \ref{alg:SGD_vvCV} presents this abstractly as a function $\textsc{Update}_{\theta,\beta} (\theta, \beta, B; \D )$, which takes in the current estimates of the parameters, the value of $B$ and the dataset (or minibatch) used for the update; this is because different choices of \gls{vvCV}s might benefit from different updates. For example, pre-conditioners for the gradients could be used when readily available, or when these can be estimated from data. In \Cref{sec:experiments}, we will exclusively be using the Adam optimiser \citep{Kingma2015}, a first order method with estimates of lower-order moments. 

For the penalisation term, it would be natural to take the \gls{RKHS} norm $\|g_\theta\| = \|g_\theta\|_{\H_{K_0}}$ since this would lead to the objective used in \Cref{thm:closed_form_minimiser}.
However, this can be impractical from a computational viewpoint since this norm depends on kernel evaluations for all of the training points.
For this reason, we follow the recommendation of \cite{Si2020} and use instead the Euclidean norm: $\|g_\theta\| = \|\theta\|_2$.
This still leads to a convex objective since the objective remains quadratic in $\theta$.

Algorithm \ref{alg:SGD_vvCV} is a natural approach to minimising our objective since our kernel-based \gls{vvCV}s are linear in $\theta$ and $\Lvv_m$ is convex in $(\theta,\beta)$. Many stochastic optimisation methods, such as stochastic gradient descent, will hence converge to a global minimum under regularity conditions \citep{Bottou2018}. However, note that Algorithm \ref{alg:SGD_vvCV} naturally applies to other \gls{vvCV}s whether linear or not.

\begin{algorithm}[tb]
\caption{Stochastic optimisation for \gls{vvCV}s with known task relationship}
\label{alg:SGD_vvCV}
\begin{algorithmic}
   \STATE {\bfseries Input:} $\mathcal{D}$, $\tilde{m}$, $L$, $\lambda$, $\beta^{(0)}$ and $\theta^{(0)}$.
   \FOR{iterations $l$ from $1$ {\bfseries to} $L$}
   \STATE  Select a mini-batch $\mathcal{D}_{\tilde{m}}$ of size $\tilde{m}$
   \STATE $\big(\theta^{(l)},\beta^{(l)} \big) \leftarrow \textsc{Update}_{\theta,\beta} (\theta^{(l-1)}, \beta^{(l-1)}, B; \mathcal{D}_{\tilde{m}}\big)$
   \ENDFOR
   \STATE {\bfseries Return:} $\theta^{(L)}$, $\beta^{(L)}$
\end{algorithmic}
\end{algorithm}

\subsection{Calculation of Computational Complexity}
\label{appendix:calculation_comp_cost}
In this section, we derive the computational complexity reported in \Cref{tab:computation_complex}. Suppose we have $m_t$ samples for each of $T$ tasks and similarly for all $\tilde{m}_t$ with $t\in [T]$.

\paragraph{Computational cost of CV:} 
\begin{itemize}
    \item  \textbf{Exact solution:} For each task, we need to compute $k_0(\cdot, \cdot)$ for all pairs, which results in a cost of $\O(d m_t^2)$ per task. To compute the exact solution of kernel-based control variates, it takes $\O(m_t^3)$ per task. So, in total, the computational cost of the exact solution of CV is $\O((d m_t^2 + m_t^3)T)$.

    \item   \textbf{Stochastic optimisation:} To use stochastic optimisation, suppose we use $L$ epochs. At each iteration of one epoch, we need to compute $k_0(\cdot, \cdot)$ for $\tilde{m}_t m_t$ pairs,  which costs $\O(d \tilde{m}_t m_t)$. 
    We need to do this for all $L$ iterations. This results in $\O(d \tilde{m}_t m_t L)$ per task. Hence, in total, the computational cost of stochastic optimisation of CV is $\O(d \tilde{m}_t m_t L T)$ for all $T$ tasks.

\end{itemize}

\paragraph{Computational cost of vv-CV:}  
\begin{itemize}
    \item \textbf{Exact solution:} There are $m_t T$ samples (i.e. $m_t^2 T^2$ pairs) in total and we are using mv-Stein kernels. Thus, the computational cost of computing $K_0(\cdot, \cdot)$ is $\O(d (m_t T)^2 T^2) = \O(d m_t^2 T^4 )$. To compute the exact solution of vv-CV, we can re-arrange the Gram tensor of all samples into a matrix of size $m_t T^2 \times m_t T^2$. 
    Hence, the computational cost of computing the exact solution of vv-CV is $\O((m_t T^2)^3) = \O(m_t^3 T^6)$. Thus, in total, the computational cost  of computing the exact solution of vv-CV is $\O(d m_t^2 T^4 + m_t^3 T^6)$.

    \item  \textbf{Stochastic optimisation:} At each iteration, one mini-batch of the stochastic optimisation of vv-CV has $\tilde{m_t} T$ samples. We need to compute $K_0(\cdot, \cdot)$ for these samples with all $m_t T$ samples. This leads to a cost of $\O(\tilde{m}_t T m_t T d T^2) = \O(d\tilde{m}_t m_t  T^4)$ per iteration. Note that we need to do this for all $L$ iterations. So, in total, the computational cost of stochastic optimisation of vv-CV is $\O(d\tilde{m}_t m_t L T^4)$.

\end{itemize}

\section{Proofs of the Main Theoretical Results} \label{appendix:proofs}

In this second appendix, we recall the proofs of the theoretical results in the main text. The derivation of the \gls{mvkernel} $K_0$ from \Cref{thm:K0_kernel_firstorder} can be found in \Cref{appendix:proof_K0_general}. The proof that kernel-based \gls{vvCV}s are square-integrable (\Cref{thm:squared inegrable RKHS}) is in \Cref{app:proof square integrable RKHS}. Finally, the proof of the existence of the optimal parameters as the solution to a linear system (\Cref{thm:closed_form_minimiser}) is given in \Cref{appendix:proof optimal function}.


\subsection{Proof of \Cref{thm:K0_kernel_firstorder}}
\label{appendix:proof_K0_general}
\begin{proof}
We will show $K_0$ is a kernel and derive its matrix components by constructing an appropriate  feature map.
The first order Stein operator maps matrix-valued functions $u = (u_1 , u_2 \ldots , u_T):\R^d \to \R^{d \times T}$ to the vv-function $\Svv[u]: \R^d \to \R^T$ given by
\begin{talign*}
\Svv[u] & = \left(
\L_{\Pi_1}\left[u_1\right], \ldots, 
\L_{\Pi_T}\left[u_T\right]\right)^\top  \\
\text{ where } \quad
\L_{\Pi_t}[u_t](x) & = \nabla_x \cdot u_t(x) + \nabla_x \log \pi_t(x) \cdot u_t(x) \quad \forall t \in [T].
\end{talign*}
Since $K \in C^{1,1}(\X \times \X)$,  we can use Corollary 4.36 of \cite{steinwart2008_SVM} to conclude that $\H_K$ is a vector-valued RKHS of continuously differentiable functions 
from $\R^d$ to $\R^{T}$, hence
the tensor product $\H_K^d$ consists of suitable functions
 $u \in \H_K^d$, with components $u^i = (u_1^i,\ldots, u_T^i) \in \H_K$ for $ i \in [d]$. Now since the RKHS consists of differentiable functions, we have by Lemma C.8 in \citet{barp2022targeted}:
\begin{talign*}
\metric{\partial^{j}_x K(\cdot, x) e_t }{u^i}_{\H_K} = e_t \cdot \partial^j u^i(x)= \partial^j u_t^i(x) \defn  \frac{\partial u_t^i}{\partial  x^j}(x) \quad \forall t \in [T]
\end{talign*}
where $e_t \in \R^T$ is a vector of zeros with value $1$ in the $t^{\text{th}}$ component. Then, writing $K^{e_t}_x \defn K(\cdot,x)e_t$, 
\begin{talign*}
\Svv[u](x) &=
\sum_{t=1}^T \sum_{r=1}^d (\partial^r_x u_t^r(x) + \partial^r_x \log \pi_t \, u_t^r(x)) e_t \\
& = \sum_{t=1}^T \sum_{r=1}^d \metric{\partial^r_x K_x^{e_t}+l_{tr}(x) K_x^{e_t}}{u^r}_{\H_K} e_t
\\
&=
 \sum_{t=1}^T\metric{\partial^\bullet_x K_x^{e_t}+l_{t\bullet}(x) K_x^{e_t}}{u}_{\H_K^d} e_t,
\end{talign*}
where $l_{tr}(x) = \partial^r_x \log \pi_t(x)$,
and
$\partial_x^\bullet K_x^{e_t}$ and  $l_{t\bullet}(x)$ denote respectively the tuples $(\partial_x^1 K_x^{e_t},\ldots, \partial_x^d K_x^{e_t}) \in \H_K^d$
and 
$(l_{t1}(x), \ldots, l_{td}(x)) \in \R^d$.

We have thus obtained  a feature map, i.e., a map 
$\gamma: \X \to \mathcal B(\H_K^d,\R^T)$, where $\mathcal B(\H_K^d,\R^T)$ denotes the space of bounded linear maps from $\H_K^d$ to $\R^T$, via the relation 
 \begin{talign*}
 \gamma(x) [u] =
\Svv[u](x),
 \end{talign*}
with adjoint $\gamma(x)^* =  \sum_{t=1}^T( \partial^\bullet_x K_x^{e_t}+l_{t \bullet}(x) K_x^{e_t})e_t$. Recall the adjoint map $\gamma(y)^* \in \mathcal B(\R^T,\H_K^d)$ to $\gamma(y)$,  is defined for any $a \in \R^T, u \in \H_K^d$ by the relation
\begin{talign*}
\metric{ \gamma(y)^*[a]}{u}_{\H_K^d} = \gamma(y)[u] \cdot a.
\end{talign*}

In particular, by Proposition 1 of \citet{Carmeli2010} we have that
\begin{talign*}
K_0(x,y) \defn \gamma(x) \circ \gamma(y)^* \in \R^{T \times T}
\end{talign*}
will then be the kernel associated to the ``feature operator'' (that is, a surjective partial isometry whose image is $\H_{K_0}$) $\Svv:\H_K^d \to \H_{K_0}$.
Subbing in the expressions for the feature map and its adjoint derived above, and using  the equalities 
\begin{talign*}
\metric{\partial^{s}_x K(\cdot, x) e_t }{ \partial_y^r K(\cdot,y)e_{t'}}_{\H_K} 
=
e_t \cdot
 \partial^s_x\partial_y^r K(x,y)e_{t'} 
 = 
 (\partial^s_x\partial_y^r K(x,y))_{tt'} \qquad \forall t,t' \in [T]\\ 
 \text{and } \qquad \metric{\partial^{ss}_x K(\cdot, x) e_t }{ \partial_y^r K(\cdot,y)e_{t'}}_{\H_K}=
  (\partial^{ss}_x\partial_y^r K(x,y))_{tt'} \qquad \forall t,t' \in [T],
\end{talign*}
 which hold for any differentiable  matrix-valued kernel \citep{barp2022targeted},
 we obtain  the following expression for the components of $K_0$
\begin{talign*}
(K_0(x,y))_{tt'} & = \sum_{r=1}^d (\partial^r_x \partial^r_y K(x,y))_{tt'}+ l_{t'r}(y)(\partial^r_x K(x,y))_{tt'} \\
& \qquad  + l_{tr}(x) \partial_y^r (K(x,y))_{tt'} + l_{tr}(x)l_{t'r}(y) (K(x,y))_{tt'}.    
\end{talign*}
In particular for separable kernels (i.e. $K(x,y)=B k(x,y)$) we have 
\begin{talign*}
(K_0(x,y))_{tt'} = B_{tt'}\sum_{r=1}^d \partial^r_x \partial^r_y k(x,y)+ l_{t'r}(y)\partial^r_x k(x,y)+ l_{tr}(x) \partial_y^r k(x,y) + l_{tr}(x)l_{t'r}(y) k(x,y).    
\end{talign*}
\end{proof}


\subsection{Proof of \Cref{thm:squared inegrable RKHS}}
\label{app:proof square integrable RKHS}
\begin{proof}

Recall that if a scalar kernel 
$k$ satisfies $\int_{\X} k(x,x) \mathrm{d} \mu (x) < \infty$, then its RKHS consists of square $\mu$-integrable functions (for any finite measure $\mu$) \citep[Theorem 4.26]{steinwart2008_SVM}. 

If $g \in \H_{K_0}$ then $g_t$ belongs to the RKHS with scalar-valued kernel (this follows from \citep[Prop. 1]{Carmeli2010}, using as feature operator the dot product with respect to $e_t$, where $e_t$ is defined in \cref{appendix:proof_K0_general})
\begin{talign*}
(K_0(x,y))_{tt}
& =\sum_{r=1}^d (\partial^r_x \partial^r_y K(x,y))_{tt}+ l_{tr}(y)\partial^r_x (K(x,y))_{tt} \\
& \quad + l_{tr}(x) \partial_y^r (K(x,y))_{tt} + l_{tr}(x) l_{tr}(y) (K(x,y))_{tt} \qquad \forall t \in [T].
\end{talign*}
In particular since $K$ is bounded with bounded derivatives, and
\begin{talign*}
    \Pi_t\left[| l_{tr} |\right]
  + \Pi_t \left[| l_{tr} |^2 \right]   \leq 
    \sqrt{\Pi_t\left[| l_{tr} |^2\right]}
  + \Pi_t \left[| l_{tr} |^2 \right] \qquad \forall t \in [T], r \in [d]
\end{talign*}
then $\int_{\X} (K_0(x,x))_{tt} \mathrm{d} \Pi_t (x) 
 < \infty $
if  $\| \nabla_x \log \pi_t(x) \|_2$ is square integrable with respect to $\Pi_t$, and the result follows. 
\end{proof}

\subsection{Proof of \Cref{thm:closed_form_minimiser}}
\label{appendix:proof optimal function}

\begin{proof}

We want to find 
  \begin{talign*}
  & \argmin_{g \in \mathcal{H}_{K_0}} \Lvv_m(g,\beta) \\
    \text{ where } \qquad \Lvv_m(g,\beta)
    & 
    := \sum_{t=1}^T  \frac{1}{m_t} \sum_{j=1}^{m_t} (f_t(x_{tj}) - g_t(x_{tj}) - \beta_t)^2 +  \lambda \|g\|^2_{\mathcal{H}_{K_0}}.
\end{talign*}
Note that the objective is the same as that in \Cref{eq:Lvv_m}, with the only difference being that the first input is now a function as opposed to the parameter value parameterising this function. We will abuse notation by using the same mathematical expression for both objectives.

By \citet[Section 2.1]{ciliberto2015_convex_learning_multitasks_and_structure}, any  solution of the minimization problem has the form $\hat g(\cdot) \defn \sum_{t'=1}^T\sum_{j'=1}^{m_{t'}} K_0(\cdot \,, x_{t'j'}) \theta_{t'j'}$.
Subbing this solution into $\Lvv_m(g,\beta)$ yields
 \begin{talign*}
  \Lvv_m (\hat g,\beta)  &= \sum_{t=1}^T  \frac{1}{m_t} \sum_{j=1}^{m_t} (f_t(x_{tj}) - ( \sum_{t'=1}^T\sum_{j'=1}^{m_{t'}} K_0(x_{tj},x_{t'j'})_t \theta_{t'j'} - \beta_t)^2 
  \\ & \qquad + \lambda  \sum_{t',t''=1}^T \sum_{j'=1}^{m_{t'}}\sum_{j''=1}^{m_{t''}} \theta_{t'j'}^T K_0(x_{t'j'},x_{t''j''})\theta_{t''j''}\\
  &= \lambda  \sum_{t',t''=1}^T \sum_{j'=1}^{m_{t'}}\sum_{j''=1}^{m_{t''}} \theta_{t'j'}^T K_0(x_{t'j'},x_{t''j''})\theta_{t''j''}
  \\
  & \qquad +
  \sum_{t=1}^T  \frac{1}{m_t} \sum_{j=1}^{m_t} \Big(y_{tj}^2 + (\sum_{t'=1}^T\sum_{j'=1}^{m_{t'}} K_0(x_{tj},x_{t'j'})_t \theta_{t'j'})^2 \\
  & \qquad \qquad \qquad \qquad \qquad \qquad -2 \sum_{t'=1}^T\sum_{j'=1}^{m_{t'}} y_{tj}
  K_0(x_{tj},x_{t'j'})_t \theta_{t'j'} \Big),
 \end{talign*}
 where $y_{tj} \defn f_t(x_{tj})-\beta_t$. 
 The problem thus becomes a minimization problem over the coefficients $\theta$,
 \begin{talign*}
  \argmin_{\theta \in \R^{| \mathcal D |}} 
  \lambda & \sum_{t',t''=1}^T \sum_{j'=1}^{m_{t'}}\sum_{j''=1}^{m_{t''}} \theta_{t'j'}^T K_0(x_{t'j'},x_{t''j''})\theta_{t''j''} \\
  & -  2 \sum_{t,t'=1}^T \frac{1}{m_t} \sum_{j=1}^{m_t}
  \sum_{j'=1}^{m_{t'}}
  \theta_{t'j'}^T K_0(x_{t'j'},x_{tj})_{\cdot t} y_{tj} 
  + 
  \sum_{t=1}^T \frac{1}{m_t} \sum_{j=1}^{m_t} y_{tj}^2\\
  &+
  \sum_{t,t',t''=1}^T\sum_{j=1}^{m_{t}}\sum_{j'=1}^{m_{t'}}
  \sum_{j''=1}^{m_{t''}}
    \theta_{t'j'}^T K_0(x_{t'j'},x_{tj})_{\cdot t} \frac{1}{m_t} K_0(x_{tj},x_{t''j''})_{t \cdot}\theta_{t'' j''} .
 \end{talign*}
Since the quadratic terms are positive definite, the resulting objective is a convex function of $\theta$,
thus, by differentiating it, we obtain that the solution $\theta$ is the solution to
 \begin{talign*}
\sum_{t'=1}^T \sum_{j'=1}^{m_{t'}} \Big( \sum_{t=1}^{T}\frac{1}{m_t} &  \sum_{j=1}^{m_t}  
     K_0(x_{t''j''},x_{tj})_{\cdot t}  K_0(x_{tj},x_{t'j'})_{t \cdot} + \lambda  K_0(x_{t''j''},x_{t'j'}) \Big) \theta_{t'j'} \\
    &=\sum_{t}^T \frac{1}{m_t} \sum_{j=1}^{m_t}    K_0(x_{t''j''
},x_{tj})_{\cdot t} (f_t(x_{tj})-\beta_t), \qquad \forall t'' \in [T], j'' \in [m_T].
\end{talign*}

Finally, generalising the scalar case, we say that a matrix-valued reproducing kernel $K_0$ is strictly positive definite if for any finite set of $\gamma_s \in \R^T$ and distinct points $y_s \in \R^d$   we have
$ \sum_{s,\ell} \gamma^\top_s K_0(y_s,y_\ell) \gamma_\ell =0$ implies each $\gamma_s$ is zero -- this means that the mean embedding of $K_0$ is injective (or characteristic) over the set of linear functionals of the form $\delta_y^\gamma:f \mapsto \sum_t f_t(y) \gamma_t$.
It follows that the map $\R^{\mathcal D_x} \times \R^T \to \R^{\mathcal D_x} \times \R^T $, where $\mathcal D_x =\{x_{1j},\ldots , x_{T m_T} \}$, defined as 
$$\theta \mapsto \left( \sum_{t=1}^T \sum_{j=1}^{m_t} K_0(x_{1j},x_{tj}) \theta_{tj} \,, \ldots, \sum_{t=1}^T \sum_{j=1}^{m_t} K_0(x_{Tm_T},x_{tj}) \theta_{tj}  \right) $$
is injective between vector spaces of the same dimension, and thus invertible by the rank theorem. 
Hence, since by above the quadratic term is positive definite, the linear system may be inverted to find $\theta^*$.
\end{proof}


\section{Alternative Constructions}\label{appendix:alternative_constructions}

 In this third appendix, we will now provide alternative constructions to those presented in the main text. First, in \Cref{appendix:2ndordervvCVs}, we present kernel-based \gls{vvCV}s derived from the second order Langevin Stein operator. Then, in \Cref{appendix:alternative_constructions_general_discussion} and \Cref{appendix:poly_vvCVs} we point out how these constructions can lead to polynomial-based \gls{vvCV}s.


\subsection{Kernel-based \gls{vvCV}s from Second-Order Langevin Stein Operators} \label{appendix:2ndordervvCVs}

The \emph{Langevin Stein operator} can also be adapted to apply to the derivative of twice differentiable
scalar-valued functions $u:\X \rightarrow \R$, in which case it is called the \emph{second-order Langevin Stein operator}:
\begin{talign}\label{eq:Langevin_sv}
\L'[u](x):= \Delta_{x} u(x) + \nabla_{x} u(x) \cdot \nabla_{x} \log \pi(x),
\end{talign}
where $\Delta_x = \nabla_x \cdot \nabla_x$.

In this section we will consider the second-order Langevin Stein operator which acts on scalar-valued functions. The following theorem provides a characterisation of the class of \gls{vvfunctions} obtained when applying this operator to functions in a \gls{vvRKHS}.
\begin{theorem}\label{thm:K0_kernel_secondorder}
Consider $\mathcal{H}_K$ which is a \gls{vvRKHS} with \gls{mvkernel} $K:\X \times \X \rightarrow \R^{T \times T}$, and suppose that $K \in C^{2,2}(\X \times \X)$. Furthermore, for suitably regular \gls{vvfunctions} $u=(u_1,\ldots,u_T):\X \rightarrow \R^T$ define the differential operator
\begin{talign*}
\Svv[u] = (\L'_{\Pi_1}[u_1],\ldots, \L'_{\Pi_T}[u_T])^\top.
\end{talign*}
 Then, the image of $\mathcal{H}_K$ under $\Svv$ is a \gls{vvRKHS} with reproducing kernel $K_0:\X \times \X \rightarrow \R^{T \times T}$:
\begin{talign*}
(K_0(x,y))_{tt'} 
& =
\sum_{r,s=1}^d  \partial^{ss}_x \partial^{rr}_y (K(x,y))_{tt'} + l_{t'r}(y)\partial^{ss}_x \partial^{r}_y (K(x,y))_{tt'} \\
&
\qquad + l_{ts}(x) \partial^s_x \partial^{rr}_y (K(x,y))_{tt'}+  l_{ts}(x)  l_{t'r}(y) \partial^{s}_x \partial^r_y (K(x,y))_{tt'} \qquad \forall t,t' \in [T].
\end{talign*}
\end{theorem}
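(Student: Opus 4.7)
The plan is to mirror the feature-operator argument used in \Cref{appendix:proof_K0_general} for the first-order case, with two structural adjustments appropriate to the second-order Langevin operator. First, because each $\L'_{\Pi_t}$ takes a \emph{scalar-valued} input $u_t$ (rather than a $d$-valued one), the input space is $\H_K$ itself, not $\H_K^d$: a single $u \in \H_K$ has components $u_t \in \R$ and feeds directly into the $T$ operators $\L'_{\Pi_1},\ldots,\L'_{\Pi_T}$. Second, the reproducing identities we invoke now involve second derivatives, which is precisely what the regularity hypothesis $K \in C^{2,2}(\X \times \X)$ is designed to support: by Corollary~4.36 of \citet{steinwart2008_SVM} and its vector-valued analogue, $\H_K$ consists of functions that are twice continuously differentiable, so $\Svv[u]$ is well defined pointwise for every $u \in \H_K$.

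Next, I would express each component $\L'_{\Pi_t}[u_t](x)$ as an $\H_K$-inner product against an explicit element. Using the standard identities
\begin{talign*}
\partial^{s}_x u_t(x) = \metric{\partial^s_x K(\cdot,x) e_t}{u}_{\H_K}, \qquad
\partial^{ss}_x u_t(x) = \metric{\partial^{ss}_x K(\cdot,x) e_t}{u}_{\H_K},
\end{talign*}
which follow from Lemma C.8 of \citet{barp2022targeted} applied to $K \in C^{2,2}$, one obtains
\begin{talign*}
\Svv[u](x) = \sum_{t=1}^T \metric{\sum_{s=1}^d \bigl(\partial^{ss}_x K(\cdot,x) e_t + l_{ts}(x)\,\partial^s_x K(\cdot,x) e_t\bigr)}{u}_{\H_K} e_t.
\end{talign*}
This identifies a feature map $\gamma:\X \to \mathcal{B}(\H_K,\R^T)$ by $\gamma(x)[u] = \Svv[u](x)$, with adjoint $\gamma(y)^*:\R^T \to \H_K$ given by
\begin{talign*}
\gamma(y)^*[a] = \sum_{t'=1}^T a_{t'} \sum_{r=1}^d \bigl(\partial^{rr}_y K(\cdot,y) e_{t'} + l_{t'r}(y)\,\partial^r_y K(\cdot,y) e_{t'}\bigr).
\end{talign*}

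Then, by Proposition~1 of \citet{Carmeli2010}, the composition $K_0(x,y) \defn \gamma(x) \circ \gamma(y)^*$ is the reproducing kernel of the image \gls{vvRKHS} $\Svv(\H_K)$, and $\Svv$ is the associated feature (partial-isometry) operator onto $\H_{K_0}$. Finally, to read off the components I would expand $(K_0(x,y))_{tt'} = e_t \cdot \gamma(x)[\gamma(y)^*[e_{t'}]]$ by bilinearity and apply the four reproducing identities
\begin{talign*}
\metric{\partial^{ss}_x K(\cdot,x) e_t}{\partial^{rr}_y K(\cdot,y) e_{t'}}_{\H_K} &= (\partial^{ss}_x \partial^{rr}_y K(x,y))_{tt'},\\
\metric{\partial^{ss}_x K(\cdot,x) e_t}{\partial^{r}_y K(\cdot,y) e_{t'}}_{\H_K} &= (\partial^{ss}_x \partial^{r}_y K(x,y))_{tt'},
\end{talign*}
and the two symmetric analogues. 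Collecting the four resulting terms over $r,s\in[d]$ gives exactly the claimed expression for $(K_0(x,y))_{tt'}$.

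The only delicate step is the \emph{suitably regular} qualifier on $u$: to apply the reproducing identities above one must know that second partials of $u\in\H_K$ can be represented as $\H_K$-inner products with $\partial^{ss}_x K(\cdot,x)e_t$, and that these latter elements genuinely lie in $\H_K$. Both follow from $K\in C^{2,2}(\X\times\X)$ via the vector-valued extension of Corollary~4.36 of \citet{steinwart2008_SVM} together with Lemma C.8 of \citet{barp2022targeted}, so the proof reduces to bookkeeping once this regularity is invoked. Beyond this, the algebra is a direct repeat of the first-order argument with higher-order derivatives, so I would refer the reader back to \Cref{appendix:proof_K0_general} for the details common to both cases.
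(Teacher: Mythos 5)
Your proposal is correct and follows essentially the same route as the paper's own proof: the same feature-map construction $\gamma(x)[u]=\Svv[u](x)$ on $\H_K$ (correctly noting the scalar-input/second-order structure replaces $\H_K^d$), the same adjoint computation, and the same reproducing-derivative identities used to read off $(K_0(x,y))_{tt'}$ from $\gamma(x)\circ\gamma(y)^*$. No gaps beyond the bookkeeping you already acknowledge.
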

We note that this theorem is very similar to \Cref{thm:K0_kernel_firstorder}, and recovers the kernel of \cite{Barp2018} when $T=1$, provided we use the manifold  analog of \cref{eq:Langevin_sv}.
Indeed, one advantage of \cref{eq:Langevin_sv} is that the associated \Cref{thm:K0_kernel_secondorder}  can be easily extended to manifolds (more generally, we can obtain a similar result for any generators of measure-preserving diffusion given in Corollary 5.3 of \citet{barp2021unifying}). 
However, one particular disadvantage of this construction from a computational viewpoint is that it requires higher-order derivatives of the kernel $K$. It also requires the evaluation of a double sum, which significantly increases computational cost relative to our construction in the main text. For this reason, we did not explore this construction in more details.

\begin{proof} 
We proceed as for the proof of \Cref{thm:K0_kernel_firstorder} and shall derive a feature map for $K_0$.
Recall  that $g = \Svv[u] = (\L_{\Pi_1}'[u_1],\ldots,\L_{\Pi_T}'[u_T])^\top$, where $\L_{\Pi_i}'$ is the second-order Stein operator, which maps scalar functions to scalar functions. Here $u$ belongs to a \gls{RKHS} of $\R^T$-valued functions with matrix kernel $K$.
From the differentiability assumption on $K$, we have $\H_K \subset C^2$, i.e., it is a space of twice continuously differentiable functions.
Note that
 (here $\partial^{jj} = \partial^j \partial^j = \frac{\partial^2}{\partial x_j \partial x_j}$)
\begin{talign*}
\metric{\partial^{jj}_x K(\cdot, x) e_t }{u}_{\H_K} = \partial^{jj} u_t (x) \defn
\frac{\partial^2 u_t}{\partial x_j \partial x_j}(x) \qquad \forall t \in [T],
\end{talign*}
where $e_t$ is the $t^{th}$ standard basis  vector of $\R^T$ as before.
Thus
\begin{talign*}
\L_{\Pi_t}'[u_t](x) & = \Delta_x u_t(x) + \nabla_x \log \pi_t(x) \cdot \nabla_x u_t(x) \\
&=  \sum_{s=1}^d \partial^{ss} u_t(x) + \sum_{s=1}^d l_{ts} (x)  \partial^s  u_t (x) 
 \\
 &= 
\sum_{s=1}^d \metric{\partial^{ss}_x K(\cdot, x) e_t }{u}_{\H_K} +
\sum_{s=1}^d \metric{ l_{ts} (x) \partial^{s}_x K(\cdot, x) e_t }{u}_{\H_K} \\
& =
\sum_{s=1}^d \metric{\partial^{ss}_x K(\cdot, x) e_t + l_{ts} (x) \partial^{s}_x K(\cdot, x) e_t}{u}_{\H_K} \qquad \qquad \qquad \qquad \forall t \in [T].    
\end{talign*}
Hence 
\begin{talign*}
\Svv[u](x)= \begin{pmatrix}  
\metric{\sum_{s=1}^d \partial^{ss}_x K(\cdot, x) e_1 +l_{1s} (x) \partial^{s}_x K(\cdot, x) e_1}{u}_{\H_K}\\
\vdots\\
\metric{\sum_{s=1}^d \partial^{ss}_x K(\cdot, x) e_T +l_{Ts} (x) \partial^{s}_x K(\cdot, x) e_T}{u}_{\H_K}
\end{pmatrix} \in \R^T.
\end{talign*}
Note that for each $x \in \X$, each component of the above  is  a bounded linear operator $\H_K \to \R$ (i.e., the map $u \mapsto (\Svv(u)(x))_s \in \R$ to the $s$-component is a bounded linear operator),
then 
we have obtained a  a feature map, i.e., a map 
$\gamma: \X \to \mathcal B(\H_K,\R^T)$, where $\mathcal B(\H_K,\R^T)$ denotes the space of bounded linear maps from $\H_K$ to $\R^T$.
Specifically
\begin{talign*}
\gamma(x) \defn  \Svv[\cdot](x) \in \mathcal B(\H_K,\R^T)  .
\end{talign*}
In particular, as before 
\begin{talign*}
K_0(x,y) \defn \gamma(x) \circ \gamma(y)^* \in \mathcal B(\R^T,\R^T)
\end{talign*}
will thus be the kernel associated to the ``feature operator'' $\Svv:\H_K \to \H_{K_0}$.
Recall that  $\gamma(y)^* \in \mathcal B(\R^T,\H_K)$ is the adjoint map to $\gamma(y)$, i.e., it satisfies for any $a \in \R^T, u \in \H_K$:
\begin{talign*}
 \metric{ \gamma(y)^*[a]}{u}_{\H_K} = \gamma(y)[u] \cdot a.
\end{talign*}
From this we obtain
\begin{talign*}
 \gamma(y)^* : a \mapsto  \sum_{r=1}^d \sum_{t=1}^T a_t \left(
 \partial^{rr}_y K(\cdot, y) e_t +l_{tr} (y) \partial^{r}_y K(\cdot, y) e_t \right) \in \H_K.
\end{talign*}
From $K_0(x,y)a = \gamma(x) \circ  \gamma(y)^*[a]$ for all $a \in \R^T$ and the above expressions we can finally calculate $K_0$. We  have
\begin{talign*}
K_0(x,y)a = \Svv[\gamma(y)^*a](x)= \begin{pmatrix}  
\metric{\sum_{s=1}^d \partial^{ss}_x K(\cdot, x) e_1 +l_{1s} (x) \partial^{s}_x K(\cdot, x) e_1}{\gamma(y)^*a}_{\H_K}\\
\vdots\\
\metric{\sum_{s=1}^d \partial^{ss}_x K(\cdot, x) e_T +l_{Ts} (x) \partial^{s}_x K(\cdot, x) e_T}{\gamma(y)^*a}_{\H_K}
\end{pmatrix}.
\end{talign*}
 We obtain that $K_0(x,y)a $ is a vector with components:
  \begin{talign*}
  (K_0(x,y)a)_t  = \sum_{r,s=1}^d \sum_{t'=1}^T a_{t'}\Big( & (\partial^{ss}_x \partial^{rr}_y K(x,y))_{tt'} + l_{t'r}(y)(\partial^{ss}_x \partial^{r}_y K(x,y))_{tt'} \\
   & + l_{ts}(x) (\partial^s_x \partial^{rr}_y K(x,y))_{tt'}+ l_{ts}(x)  l_{t'r}(y)
(\partial^{s}_x \partial^r_y K(x,y))_{tt'}\Big) \quad \forall t \in [T].
  \end{talign*}
Thus the components of $K_0(x,y) \in \R^{T \times T}$ are
\begin{talign*}
(K_0(x,y))_{tt'} 
& =
\sum_{r,s=1}^d  (\partial^{ss}_x \partial^{rr}_y K(x,y))_{tt'} + l_{t'r}(y)(\partial^{ss}_x \partial^{r}_y K(x,y))_{tt'} \\
&
+ l_{ts}(x) (\partial^s_x \partial^{rr}_y K(x,y))_{tt'}+  l_{ts}(x)  l_{t'r}(y) (\partial^{s}_x \partial^r_y K(x,y))_{tt'} \qquad \forall t,t'\in [T].
\end{talign*}
\end{proof}

Analogously to the \gls{mvkernel} in \Cref{thm:K0_kernel_firstorder}, there are several cases of practical interest. The first is when $K(x,y) = B k(x,y)$ is a separable kernel, in which case:
\begin{talign*}
     (K_0(x,y))_{tt'} &= B_{tt'} \sum_{r,s=1}^d \partial^{ss}_x \partial^{rr}_y k(x, y)  + l_{t'r}(y) \partial^{ss}_x \partial^r_y k(x,y)   \nonumber\\
     &\quad \qquad + l_{ts}(x) \partial^s_x \partial^{rr}_y k(x, y) + l_{ts}(x) l_{t'r}(y) \partial^s_x \partial^r_y k(x, y) \qquad \forall t,t' \in [T].
\end{talign*}
The second is when $K$ is separable and $\Pi_1 = \ldots = \Pi_T$, in which case $l_r(x) := l_{1r}(x) = \ldots = l_{Tr}(x)$ $\forall r \in [d]$ and:
\begin{talign*}
    (K_0(x,y))_{tt'} & = B_{tt'} \sum_{r,s=1}^d   \partial^{ss}_y \partial^{rr}_x k(x,y)  +  l_r(x) ~ \partial^{ss}_y \partial^r_x k(x,y)   \nonumber \\
     &\quad \qquad + l_s(y)  ~ \partial^s_y \partial^{rr}_x k(x,y) + l_s(y) l_r(x) \partial^s_y \partial^r_x k(x,y) \qquad \forall t,t' \in [T]. 
\end{talign*}

\subsection{Alternative Constructions beyond Kernels}\label{appendix:alternative_constructions_general_discussion}

Although kernels are a natural way of constructing functions for multi-task problems, it is also possible to generalise constructions based on other parametric families such as polynomials or neural networks. We will not explore this avenue in detail in the present paper, but now provide brief comments on how such generalisations could be obtained.

Firstly, $u_\theta$ could be based on any additive model such as a polynomial or wavelet expansion. In that case, it is straightforward to construct \gls{vvCV}s with a separable structure as follows: 
\begin{talign}
\label{eq:vv-linear-CV}
(u_\theta(x))_t   = \sum_i \sum_{t' =1}^T B_{tt'} \theta_i \phi_i(x), \quad (g_\theta(x))_t  = \sum_{i} \sum_{t' =1}^T B_{tt'} \theta_{i} \Ssv_{\Pi_t}[\phi_i(x)]  \quad \forall t \in [T], 
\end{talign}
where $B \in S^T_+$ and $\phi_i:\X \rightarrow \R$ is a (sufficiently regular) basis function. In particular, taking the basis functions to be of the form $x^{\alpha}$ for $\alpha \in \N^d$ recovers the polynomial-based \gls{CV}s of \cite{mira2013_zerovariance_MCMC}. We also note that any model of this form leads to a quadratic \gls{MC} variance objective, whose solution can be obtained in closed form under mild regularity conditions on the basis functions.

Secondly, we could use non-linear models for $u_\theta$. In that case, one approach would be to use a separable structure of the form:
\begin{talign}
(u_\theta(x))_t   = \sum_{t' =1}^T B_{tt'} \phi_\theta(x), \quad (g_\theta(x))_t  = \sum_{t'=1}^T B_{tt'} \Ssv_{\Pi_t}[\phi_{\theta}(x)]  \quad \forall t \in [T]. \label{eq:vv-nonlinear-CV}
\end{talign}
where $\phi_\theta(x)$ is a non-linear function of the parameters $\theta$. The above is a generalisation of the neural networks-based \gls{CV}s  of \cite{zhu2018neural_CV,Si2020} whenever $\phi_\theta$ is a neural network. Unfortunately the \gls{MC} variance objective will usually be non-convex in those cases, and we therefore have no guarantees of recovering the optimal parameter value when using most numerical optimisers.


\subsection{Polynomial \gls{vvCV}s}\label{appendix:poly_vvCVs}

In \Cref{appendix:alternative_constructions_general_discussion}, we have discussed a construction for \gls{vvCV}s based on polynomials which recovers the work of \cite{mira2013_zerovariance_MCMC}. However, it is also possible to obtain polynomial-based \gls{vvCV}s directly through our kernel constructions in \Cref{thm:K0_kernel_firstorder} and \Cref{appendix:2ndordervvCVs}. In particular, one option would be to take $K(x,y) = B k(x,y)$ where $B \in S_+^T$ and $k(x,y) = (x^\top y +c)^l$ where $c \in \R$ and $l\in \N$. Firstly, using the first-order Langevin Stein operator and setting $l=1$, we obtain:
\begin{talign*}
   \left( K_0(x,y) \right)_{tt'} &= B_{tt'} \sum_{r=1}^d \left[ 1 + l_{t'r}(y) y_r + l_{tr}(x) x_r + l_{tr}(x) l_{t'r}(y) \left( x^\top y +c \right) \right] \qquad \forall t \in [T].
\end{talign*}
Similarly when $l=2$, we get:
\begin{talign*}
   \left( K_0(x,y) \right)_{tt'} &=  B_{tt'} \sum_{r=1}^d \Big[ 2 x_r y_r + 2 \left( x^\top y +c \right) + 2 y_r  l_{t'r}(y) \left(  x^\top y +c \right)  \\
   & \qquad \qquad + 2 x_r l_{tr}(x) \left( x^\top y +c \right)  + l_{tr}(x) l_{t'r} \left( x^\top y  +c \right)^2  \Big] \qquad \forall t \in [T].
\end{talign*}
These two choices were considered in the experiments in \Cref{sec:experiments}. An alternative would be to consider this same kernel, but using the construction based on second-order Langevin Stein operators. Again, taking $l=1$, we obtain: 
\begin{talign*}
(K_0(x,y))_{tt'} = \sum_{r=1}^d l_{tr}(x) l_{t'r}(y) B_{tt'}  \qquad \forall t \in [T].
\end{talign*}
Similarly, when $l=2$, we get:
\begin{talign*}
    (K_0(x,y))_{tt'} &= B_{tt'}\Big[ 4 \left(d  +    \sum_{r=1}^d l_{t'r}(y)y_r + l_{tr}(x) x_r\right)  \\ 
    &\quad + 2 \left( \sum_{r=1}^d  l_{tr}(x) l_{t'r}(y) \left( x^\top y + c \right) + \sum_{r,s=1}^d  l_{ts}(x) l_{t'r}(y)  x_r y_s  \right) \Big]  \qquad \forall t \in [T].
\end{talign*}


\section{Implementation Details}\label{appendix:implementation}

In this appendix, we focus on implementation details which may be helpful for implementing the algorithms in the main text. Firstly, in \Cref{appendix:kernel_derivatives} we derive the derivatives of several common kernels; this is essential for the implementation of Stein reproducing kernels. Then, in \Cref{appendix:hyperparameters}, we provide details on how to select hyperparameters. Finally, in \Cref{appendix:convex_optim_B}, we discuss how to turn the problem of estimating $B$ from data into a sequence of convex optimisation problems.


\subsection{Kernels and Their Derivatives}\label{appendix:kernel_derivatives}

We now provide details of all the kernels used in the paper, as well as expressions for their derivatives.

\paragraph{Polynomial Kernel}
The polynomial kernel $k_l(x,y) = (x^\top y + c )^l$ with constant $c\in \R$ and power $l \in \N$ has derivatives given by
\begin{talign*}
   \nabla_x k_l(x,y) & = l (x^\top y +c)^{l-1} y,  \quad   \nabla_y k_l(x,y) = l (x^\top y +c)^{l-1} x,  \\
    \nabla_x \cdot \nabla_y k_l(x,y) &= \sum_{j=1}^d \frac{\partial^2}{\partial x_j\partial y_j} k_l(x,y) = \sum_{j=1}^d \frac{\partial}{\partial x_j} \left[ l(x^\top y + c)^{l-1} x_j \right] \\
    &= \sum_{j=1}^d  l (l-1) (x^\top y +c)^{l-2} y_j x_j + l (x^\top y + c)^{l-1}  \\
    &= l(l-1) (x^\top y + c)^{l-2} x^\top y + d l (x^\top y + c)^{l-1}.
\end{talign*}

\paragraph{Squared-Exponential Kernel}
The squared-exponential kernel (sometimes called Gaussian kernel) $k(x,y) = \exp(-\frac{\|x-y\|_2^2}{2\lambda})$ with lengthscale $\lambda > 0$ has derivatives given by
\begin{talign*}
    \nabla_x k(x,y) &= - \frac{(x-y)}{\lambda} k(x,y), \qquad
    \nabla_y k(x,y) = \frac{(x-y)}{\lambda} k(x,y),\\
    \nabla_x \cdot \nabla_y k(x,y) &= \sum_{j=1}^d \frac{\partial^2}{\partial y_j \partial x_j}  k(x,y)  = \sum_{j=1}^d \frac{\partial}{\partial y_j} \left[ -\frac{(x_j-y_j)}{\lambda}k(x,y) \right] \\
    & = \sum_{j=1}^d \left[ \frac{1}{\lambda} - \frac{(x_j - y_j)^2}{\lambda^2} \right] k(x,y)
    = \left[ \frac{d}{\lambda} - \frac{(x - y)^\top (x - y)}{\lambda^2} \right] k(x,y).
\end{talign*}

\paragraph{Preconditioned Squared-Exponential Kernel}
Following \citet{oates2017_CF_for_MonteCarloIntegration}, we also considered a preconditioned squared-exponential kernel:
\begin{talign*}
    k(x,y) &= 
    \frac{1}{(1+\alpha \|x\|_2^2)(1+\alpha\|y\|_2^2)} \exp\left(-\frac{\|x-y\|_2^2}{2\lambda^2}\right).
\end{talign*}
with lengthscale $\lambda>0$ and preconditioner parameter $\alpha>0$. This kernel has derivatives given by:
\begin{talign*}
\nabla_x k(x,y) &= 
    \left[ \frac{- 2 \alpha x}{1+\alpha \|x\|_2^2} - \frac{(x-y)}{\lambda^2} \right] k(x,y), \quad
    \nabla_y k(x,y) = 
    \left[ \frac{- 2 \alpha y}{1+\alpha \|y\|_2^2} + \frac{(x-y)}{\lambda^2}  \right] k(x,y), \\
    \nabla_x \cdot \nabla_y k(x,y) 
    & = 
    \sum_{j=1}^d \frac{\partial^2 }{\partial x_j \partial y_j} k(x,y) 
    = 
    \sum_{j=1}^d \frac{\partial}{\partial y_j}\left[  \left( \frac{- 2 \alpha x_j}{1+\alpha \|x\|_2^2} - \frac{(x_j-y_j)}{\lambda^2} \right) k(x,y)  \right]  \\
    &=\sum_{j=1}^d \left( \frac{1}{\lambda^2} k(x,y) + \left[  \frac{- 2 \alpha x_j}{1+\alpha \|x\|_2^2} - \frac{(x_j-y_j)}{\lambda^2}  \right] \frac{\partial}{\partial y_j} k(x,y) \right)  \\
    &= \sum_{j=1}^d \left( \frac{1}{\lambda^2} k(x,y) + \left[ \frac{- 2 \alpha x_j}{1+\alpha \|x\|_2^2} - \frac{(x_j-y_j)}{\lambda^2}   \right] \left[ \frac{- 2 \alpha y_j}{1+\alpha \|y\|_2^2} + \frac{(x_j-y_j)}{\lambda^2}    \right] k(x,y)    \right)  \\
    &= k(x,y) \left[ \frac{4\alpha^2 x^\top y}{(1+\alpha\|x\|_2^2)(1+\alpha\|y\|_2^2)} + \frac{2\alpha (x-y)^\top y}{\lambda^2(1+\alpha \| y \|_2^2)} 
    -  \frac{2\alpha(x-y)^\top x}{\lambda^2(1+\alpha\|x\|_2^2)} + \frac{d}{\lambda^2} - \frac{(x-y)^\top(x-y)}{\lambda^4}  \right].   
\end{talign*}

\paragraph{Product of Kernels} Finally, some of our examples will also use products of well-known kernels. Consider the kernel $k(x,y) = \prod_{j=1}^d k_j(x_j, y_j)$. The derivatives of this kernel can be expressed in terms of the components of the product and their derivates as follows:
\begin{talign*}
\nabla_x k(x,y) 
&= 
\left(\frac{\partial k_1(x_1,y_1)}{\partial x_1}  \prod_{j \neq 1} k_j(x_j, y_j), \ldots, \frac{\partial k_d(x_d,y_d)}{\partial x_d} \prod_{j \neq d} k_j(x_j, y_j)\right)^\top \\
\nabla_y k(x,y) 
&=
\left(\frac{\partial k_1(x_1,y_1)}{\partial y_1} \prod_{j \neq 1} k_j(x_j, y_j),\ldots, \frac{\partial k_d(x_d,y_d)}{\partial y_d}  \prod_{j \neq d} k_j(x_j, y_j) \right)^\top \\
 \nabla_y \cdot \nabla_x k(x, y) 
 &=
 \sum_{j=1}^d \frac{\partial^2}{\partial x_j \partial y_j} k(x,y) = \sum_{j=1}^d \frac{\partial}{\partial y_j} \left( \frac{\partial k_j(x_j, y_j)}{\partial x_j}  \prod_{i\neq j} k_i(x_i, y_i) \right)  \\
  &=  \sum_{j=1}^d \left[ \frac{\partial^2 k_j(x_j, y_j)}{\partial y_j \partial x_j}  \prod_{i\neq j} k_i(x_i, y_i) \right].
\end{talign*}


\subsection{Hyper-parameters Selection} \label{appendix:hyperparameters}

Most kernels (whether scalar- or matrix-valued) will have hyperparameters which we will have to select. For example, the squared-exponential kernel will often have a lengthscale or amplitude parameter, and these will have a significant impact on the performance. 

We propose to select kernel hyperparameters through a marginal likelihood objective by noticing the equivalence between the optimal \gls{vvCV} based on the objective in \Cref{eq:Lvv_m} and the posterior mean of a zero-mean Gaussian process model with covariance matrix $K_0(x,y)$; see \cite{oates2017_CF_for_MonteCarloIntegration} for a discussion in the sv-CV case. Unfortunately, computing the marginal likelihood in the general case can be prohibitively expensive due to the need to take inverses of large kernel matrices; the exact issue we were attempting to avoid through the use of the stochastic optimisation approaches. For simplicity, we instead maximise the marginal likelihood corresponding to $B=I_T$:
\begin{talign*}
\nu^* := \argmax_{\nu} - \frac{1}{2} \sum_{t=1}^{T} & \left( \sum_{j,j'=1}^{m_t} f_{t}(x_{tj}) (K_{\Pi_t}(\nu)+\lambda I_{m_t})^{-1}_{jj'} f_{t}(x_{tj'}) +\log \det[K_{\Pi_t}(\nu) + \lambda I_{m_t}] \right).
\end{talign*}
where $K_{\Pi_t}(\nu)$ is a matrix with entries $K_{\Pi_t}(\nu)_{ij} = k_{\Pi_t}(x_{ti}, x_{tj}; \nu)$ where $k_{\Pi_t}$ is a Stein reproducing kernel of the form in \Cref{eq:k_0} specialised to $\Pi_t$ which has hyperparameters given by some vector $\nu$. This form is not optimal when $B \neq I_T$, but we found that it tend to perform well in our numerical experiments. The regularisation parameter $\lambda$ can also be selected through the marginal likelihood. However, in practice we are in an interpolation setting and therefore choose $\lambda$ as small as possible whilst still being large enough to guarantee numerically stable computation of the matrix inverses above.


\subsection{Convex Optimisation for Estimating $B$}\label{appendix:convex_optim_B}

As discussed in \Cref{sec:learn_B_jointly_case1}, estimating the matrix $B$ for a separable kernel from data leads to a non-convex optimisation problem. Thankfully, we can approximate the optimum using a sequence of convex problems by extending the work of \citet{dinuzzo2011_learning_outputkernel_with_blockcoordinateDescent,ciliberto2015_convex_learning_multitasks_and_structure} together with \Cref{thm:closed_form_minimiser} above. 
For this, we will require that the kernel $K_0$ is separable, and shall thus restrict ourselves to the case where we have a single target distribution (i.e. special case II).

\begin{theorem}\label{thm:convex problems minimisation}
Suppose that  $\Pi_t =\Pi$ for $t \in [T]$ and $K(x,y)=B k(x,y)$ so that $K_0(x,y)=B k_0(x,y)$ where $k_0$ is defined in \Cref{eq:k_0}.
Then the following objective is convex in $(\theta,\beta,B)$ for any value of $\delta>0$:
\begin{talign*}
   \bar{L}^{\text{vv}}_{m,\delta}(\theta,\beta,B) =  \Jvv_{m}(\theta,\beta,I_T) + \lambda  \sum_{t,t'=1}^T \sum_{j=1}^{m_t} \sum_{j'=1}^{m_{t'}}\mathrm{Tr}\left[B^{\dagger} \left( k_0(x_{tj},x_{t'j'}) \theta_{tj}  \theta_{t'j'}^{\top}  + \delta^2 I_T \right)\right]  + \|B\|^2,
\end{talign*}
and for each $\beta$ and any 
sequence $\delta_\ell \rightarrow 0$, the associated 
sequence of minimisers $(\theta_\ell, B_\ell)$ converges to $(\theta_*,B_*)$ s.t., 
$(\theta_* B_*^{\dagger},B_*)$
minimises the objective in \Cref{eq:vv_CV_learnB_1D1P_raw_obj}.
\end{theorem}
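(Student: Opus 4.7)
The plan is to follow the strategy of \citet{dinuzzo2011_learning_outputkernel_with_blockcoordinateDescent,ciliberto2015_convex_learning_multitasks_and_structure}: first verify joint convexity of $\bar{L}^{\text{vv}}_{m,\delta}$ on $\R^{TN} \times \R^T \times S^T_{+}$ where $N = \sum_t m_t$; then show that the substitution $\phi_{tj} = B \theta_{tj}$ identifies the original objective in \Cref{eq:vv_CV_learnB_1D1P_raw_obj} with the $\delta = 0$ version of $\bar{L}^{\text{vv}}_{m,\delta}$; and finally invoke stability of parameterised convex programs to deduce convergence of minimisers as $\delta \to 0$.

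For convexity, $\Jvv_m(\theta, \beta, I_T)$ is a sum of squares of functions linear in $(\theta, \beta)$ because, under $K_0 = I_T k_0$, we have $g_\theta(x) = \sum_{t,j} k_0(x, x_{tj}) \theta_{tj}$; hence this term is convex and independent of $B$, and $\|B\|^2$ is convex in $B$. Stacking the $\theta_{tj}$ into $\Theta \in \R^{T \times N}$ and letting $G$ denote the $N \times N$ Gram matrix of $k_0$ at the data points, the nontrivial term becomes $\lambda \, \text{Tr}[ B^\dagger ( \Theta G \Theta^\top + \delta^2 N^2 I_T )]$. Factoring $\Theta G \Theta^\top = (\Theta G^{1/2})(\Theta G^{1/2})^\top$ reduces the analysis to the classical jointly convex function $(X, B) \mapsto \text{Tr}[X^\top B^{-1} X]$ on $\R^{T \times N} \times S^T_{+}$, which I would verify via a Schur-complement argument: its epigraph is the projection of a linear matrix inequality involving the $2 \times 2$ block matrix with diagonal $(B, Z)$ and off-diagonal $X$. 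The remaining piece $\lambda \delta^2 N^2 \text{Tr}[B^{-1}]$ is convex on $S^T_+$, so joint convexity of $\bar{L}^{\text{vv}}_{m,\delta}$ follows.

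For the second claim, define $\phi_{tj} := B \theta_{tj}$ and use $K_0 = B k_0$ together with $B^\dagger B B^\dagger = B^\dagger$ to verify the two identities $g_\theta(x) = \sum_{t,j} k_0(x, x_{tj}) \phi_{tj}$ and $\|g_\theta\|^2_{\H_{K_0}} = \sum_{t,t',j,j'} k_0(x_{tj}, x_{t'j'}) \phi_{tj}^\top B^\dagger \phi_{t'j'}$. This turns \Cref{eq:vv_CV_learnB_1D1P_raw_obj} into $\bar{L}^{\text{vv}}_{m,0}(\phi, \beta, B)$, so the original non-convex problem is exactly the pointwise $\delta \to 0$ limit of the convex ones. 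To conclude, fix $\beta$: strict convexity (assuming $G$ is strictly positive definite so that $\Jvv_m$ is strictly convex in $\theta$) together with coercivity (via the $\|B\|^2$ term) guarantees unique minimisers $(\theta_\ell, B_\ell)$ lying in a bounded set. The perturbation $\lambda \delta_\ell^2 N^2 \text{Tr}[B^\dagger]$ is non-negative and decreases monotonically to zero pointwise, so the family epi-converges to $\bar{L}^{\text{vv}}_{m,0}$; every cluster point $(\theta_*, B_*)$ therefore minimises $\bar{L}^{\text{vv}}_{m,0}(\cdot, \beta, \cdot)$, and reparametrising back yields $(B_*^\dagger \theta_*, B_*)$ as a minimiser of \Cref{eq:vv_CV_learnB_1D1P_raw_obj}. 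The main obstacle is the behaviour of $B_*$ on the boundary $\partial S^T_{+}$: if $B_*$ is rank-deficient, one must check that each $\theta_{*,tj}$ lies in the range of $B_*$, which follows from the uniform finiteness of $\text{Tr}[B_\ell^\dagger \theta_{\ell,tj} \theta_{\ell,tj}^\top]$ along the converging sequence, so that $B_*^\dagger \theta_*$ is well-defined and the original objective has finite value at the reparametrised point.
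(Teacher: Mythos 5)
Your proposal is correct and follows essentially the same route as the paper: the paper's proof simply recasts \Cref{eq:vv_CV_learnB_1D1P_raw_obj} in the separable form of Problem~($\mathcal{Q}$) of \citet{ciliberto2015_convex_learning_multitasks_and_structure} and cites their convexity analysis together with their Theorems~3.1 and~3.3 for the $\delta\to 0$ limit, whereas you re-derive those ingredients inline -- the Schur-complement proof that $(X,B)\mapsto \mathrm{Tr}[X^\top B^{-1}X]$ is jointly convex, the change of variables $\phi_{tj}=B\theta_{tj}$ with $B^\dagger B B^\dagger=B^\dagger$ identifying the original objective with the $\delta=0$ problem, and an epi-convergence argument handling the range condition at a possibly rank-deficient $B_*$. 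Two small caveats: $\Jvv_m(\theta,\beta,I_T)$ alone is not strictly convex in $\theta$ when $T>1$ (its residuals involve only the $\sum_t m_t$ functionals $(g_\theta(x_{tj}))_t$, not all $T\sum_t m_t$ coordinates of $\theta$), so existence and boundedness of the minimisers should be argued from coercivity of the full objective (the $\|B\|^2$ term for $B$ and the term $\lambda\,\mathrm{Tr}[B^{-1}\Theta G\Theta^\top]$ with $G\succ 0$ for $\theta$ on sublevel sets) rather than from strict convexity; and your limiting argument yields that cluster points of $(\theta_\ell,B_\ell)$ minimise the limit problem, which is slightly weaker than convergence of the whole sequence asserted in the statement -- that stronger conclusion is precisely what the paper imports from Theorem~3.3 of \citet{ciliberto2015_convex_learning_multitasks_and_structure}.
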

\begin{proof}
Since the kernel $K_0$ is separable, the objective \cref{eq:vv_CV_learnB_1D1P_raw_obj}
may be written in the form of 
  \citet[Problem $(\mathcal Q)$]{ciliberto2015_convex_learning_multitasks_and_structure}.
  Has shown therein, $\sum_{t,t'=1}^T \sum_{j=1}^{m_t} \sum_{j'=1}^{m_{t'}}\text{Tr} \left[B^{\dagger} \left( k_0(x_{tj},x_{t'j'}) \theta_{tj}  \theta_{t'j'}^{\top}  \right)\right]$ is jointly convex in $B$ and $\theta$, 
  and since the first term in $\Lvv_{m,\delta}(\theta,\beta,B)$ is convex in $\beta$ and $\theta$ jointly, $\Lvv_{m,\delta}(\theta,\beta,B)$ is jointly convex in $(\theta, B, \beta)$. Moreover, 
  by Theorem~3.1 \& 3.3  in 
\citep{ciliberto2015_convex_learning_multitasks_and_structure},  when $\delta \rightarrow 0$, $(\theta, B)$ converges in Frobenius norm  to $(\theta_*, B_*)$, where
$(\theta_* B^{\dagger}_*, B_*)$ a minimiser of
\cref{eq:vv_CV_learnB_1D1P_raw_obj}, where $B^{\dagger}_*$ denotes the pseudoinverse of $B_*$.
\end{proof}
This theorem could therefore be used to construct an approach based on convex optimisation algorithms which are used iteratively for a decreasing sequence of penalisation parameters in order to converge to an optimum approaching the global optimum. However, this approach is limited to the case where all distributions are identical, and is hence not as widely applicable as Algorithm \ref{alg:vvCV_block}.

\section{Additional Details for the Experimental Study}\label{appendix:additional_experiments}

This last Appendix provides additional experiments including:
an illustration plot of matrix-valued Stein reproducing kernel in \Cref{appendix:illustration_mvKernel}; a synthetic example from \cite{South2022} when the Stein kernel matches the smoothness of integrands in \Cref{appendix:south_experiments}; extra experiments for physical modelling of waterflow when having unbalanced datasets in \Cref{appendix:unbalanced_vvCV_borehole}.

Meanwhile, additional details of our numerical experiments in \Cref{sec:experiments} of the main paper are provided: multifidelity univariate step functions in \Cref{appendix:multifidelity_uni_step}; multifidelity modelling of waterflow in \Cref{appendix:multifidelity_modelling}, model evidence for dynamic systems in \Cref{appendix:TI} and Bayesian inference of Lotka-Volterra system in \Cref{appendix:Lotka}.

\subsection{Illustration of Matrix-valued Stein Kernels}
\label{appendix:illustration_mvKernel}

An illustration of matrix-valued Stein kernels $K_0$ is demonstrated in \Cref{fig:K0_plots} for the case $T=2$. As observed, the choice of kernel $k$ can have significant impacts on $K_0$. Moreover, $K_0$ possesses a well-known property of Stein kernels: even when $k$ is translation-invariant (see the top row) this may not be the case for $K_0$. This is due to the fact that $K_0$ depends on $l$. Finally, we can also observe that the two outputs of $1^\top K_0 (x,y)$ are correlated, a property which will be key when it comes to \gls{vvCV}s.

\begin{figure}[ht]
\vskip 0.1in
\begin{center}
\centerline{\includegraphics[width=0.8\columnwidth]{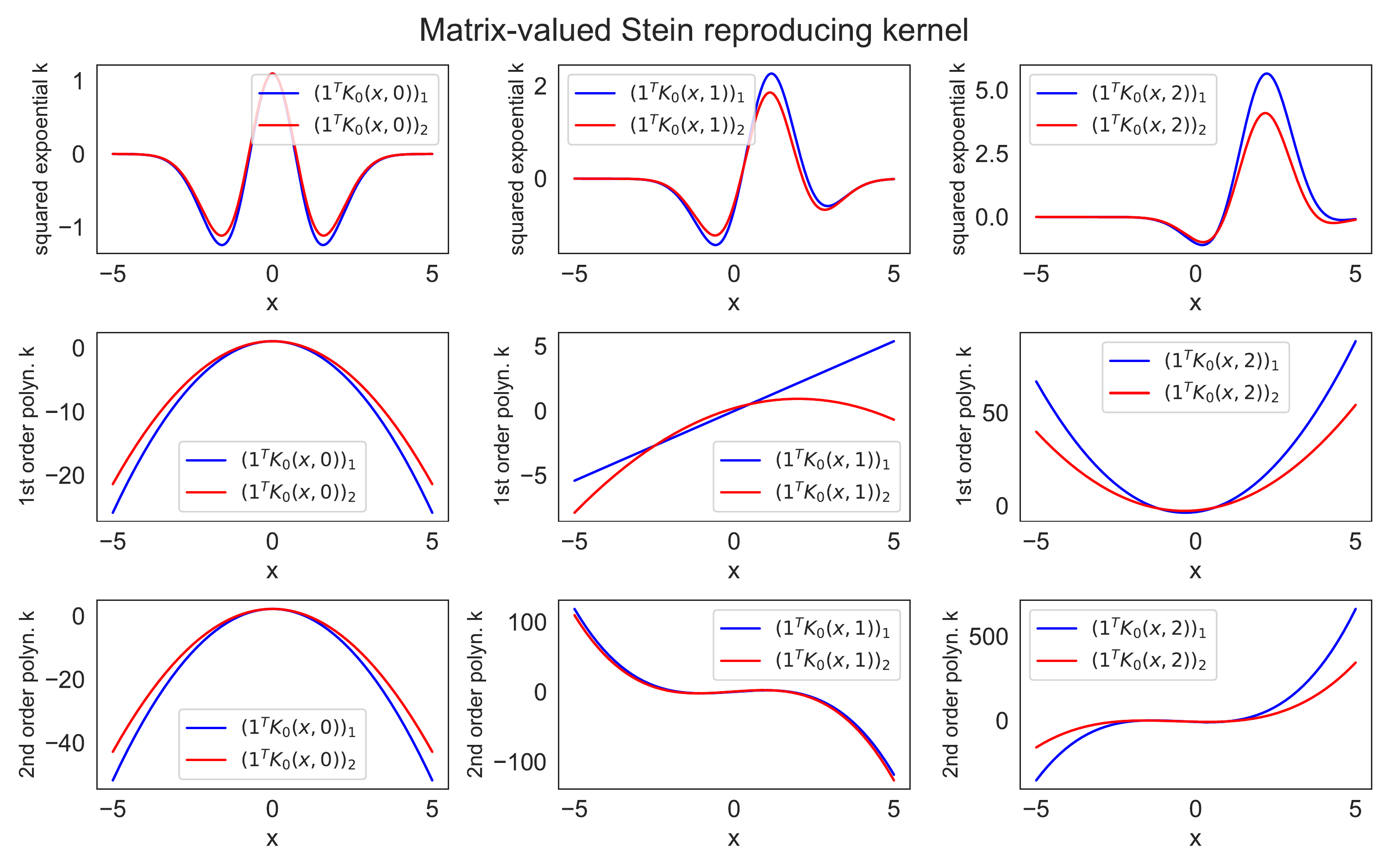}}
\caption{Illustration of a separable \gls{mvkernel} $K_0$ for $T=2$ through projections with $\bm{1} = (1,1)$. Here, $\Pi_1 = \mathcal{N}(0,1)$, $\Pi_2 = \mathcal{N}(0,1.25)$, $B_{11}=B_{22} =1$ and $B_{12}=B_{21} = 0.1$. The first row corresponds to taking $k$ to be a squared-exponential kernel, whereas the second and third row correspond to taking a polynomial kernel $k(x,y)=(x^\top y+1)^l$ with $l=1$ and $l=2$ respectively.}
\label{fig:K0_plots}
\end{center}
\vskip -0.2in
\end{figure}

\subsection{Additional Experiment: A Synthetic Example}
\label{appendix:south_experiments}
Here is a synthetic example selected from \cite{South2022} (denoted $f_2$), and to make the problem fit into our framework we introduced another similar integrand (denoted $f_1$): 
\begin{talign*}
&f_1(x) = 1.5 + x +1.5 x^2 + 1.75 \sin(\pi x)\exp(-x^2), \\
& f_2(x) = 1 + x + x^2 + \sin(\pi x) \exp(-x^2).
\end{talign*}
For this problem, we trained all \gls{CV}s through stochastic optimisation and use $m=(50,50)$ \gls{MC} samples. This synthetic example was originally used by \cite{South2022} to show one of the drawbacks of kernel-based \gls{CV}s, namely that the fitted CV will usually tend to $\beta$ in parts of the domain where we do not have any function evaluations. This phenomenon can be observed on the red lines in \Cref{fig:south2022_toyproblem} (left and center) which gives a \gls{CV} based on a squared-exponential kernel. This behaviour is clearly one of the biggest drawbacks of existing kernel-based approaches. However, the blue curve, representing a kernel-based \gls{vvCV} with separable kernel where $B$ was inferred through optimisation, partially overcomes this issue by using evaluations of both integrands, hence clearly demonstrating potential advantages of sharing function values across integration tasks. 

The right-most plot in \Cref{fig:south2022_toyproblem} presents several box plots for the sum of squared errors for each integration problem calculated over $100$ repetitions of the experiment. The different box plots show the impact of the difference in $\Pi_1$ and $\Pi_2$. As we observed, \gls{vvCV}s tend to outperform \gls{CV}s, although this difference in performance is more stark when $\Pi_2$ has a larger tail than $\Pi_1$. This reinforces the previous point, since a more disperse $\Pi_2$ means that the second integrand will be evaluated more often at more extreme areas of the domain, which will help obtain a better \gls{vvCV} by improving the fit at the tails of the distribution.

In this experiment, the choice of $k$ as a squared-exponential kernel was motivated by the fact that this makes $k_0$ infinitely differentiable, and hence matching the smoothness of both $f_1$ and $f_2$.

\begin{figure}[ht]
\vskip -0.1in
\begin{center}
\centerline{\includegraphics[width=\columnwidth]{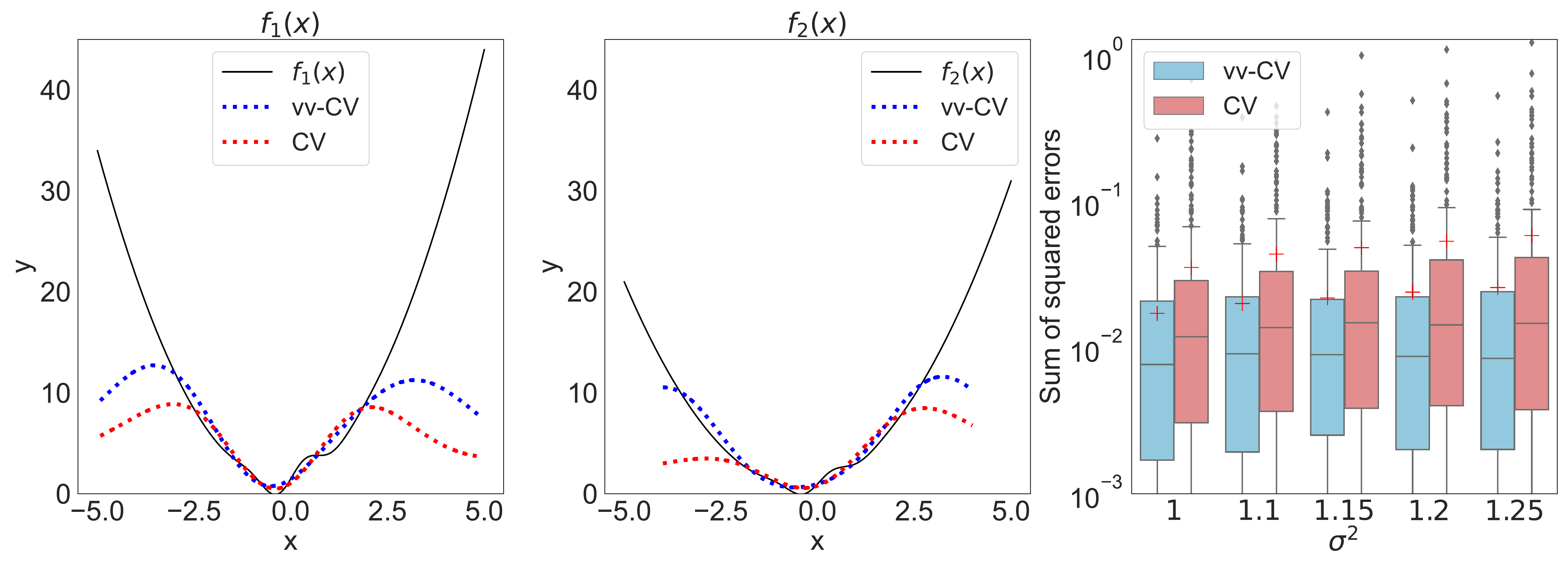}}
\caption{\emph{Numerical integration of problem from \cite{South2022}.} \emph{Left and center:} Illustration of $f_1$ and $f_2$, as well as the corresponding kernel-based \gls{CV}s and \gls{vvCV}s obtained through stochastic optimisation when $\Pi_1 = \mathcal{N}(0,1)$ and $\Pi_2 = \mathcal{N}(0,1.25)$. \emph{Right:} Sum of the squared errors in estimating $\Pi_1[f_1]$ and $\Pi_2[f_2]$. Here, $\Pi_1 = \mathcal{N}(0,1)$ whilst $\Pi_2 = \mathcal{N}(0,\sigma^2)$ where $\sigma^2 \in \{1,1.1,1.15,1.2,1.25\}$.}
    \label{fig:south2022_toyproblem}
\end{center}
\vskip -0.2in
\end{figure}

The experiment was replicated $100$ times for all methods. The exact details of the implementation are as follows.
\begin{itemize}
    \item \gls{CV} 
          \begin{itemize}
              \item Sample size: $50$.
              \item Hyper-parameter tuning: batch size $5$; learning rate $0.05$; total number of epochs $30$.
              \item Base kernel: squared exponential kernel
              \item Optimisation: $\lambda = 0.001$; batch size is $5$; learning rate is $0.001$; total number of epochs $400$.
          \end{itemize}
    \item \gls{vvCV} (estimated $B$) 
          \begin{itemize}
              \item Sample size: $(50, 50)$ from $(\Pi_1, \Pi_2)$ for $(f_1, f_2)$.
              \item Hyper-parameter tuning: batch size $5$ ($10$ in total for $(f_1, f_2)$); learning rate $0.05$; total number of epochs $30$.
              \item Base kernel: squared exponential kernel
              \item Optimisation: $B^{(0)}$ is initialized at the identity matrix $I_2$. $\lambda = 0.001$; batch size is $5$ ($10$ in total for $(f_1, f_2)$); learning rate is $0.001$; total number of epochs $400$.
          \end{itemize}
\end{itemize}

\subsection{Experimental details of Multi-fidelity Univariate Step Functions}
\label{appendix:multifidelity_uni_step}

The experiment is replicated $100$ times for all methods. Details of their implementation is given below:
\begin{itemize}
    \item Squared-exponential kernel

          \begin{itemize}
                \item CV 
                         \begin{itemize}
                            \item Sample size: $40$.
                            \item Base kernel: squared exponential kernel.
                            \item Hyper-parameter tuning:  batch size is  $10$; learning rate $0.02$; total number of epochs $15$.
                            \item Optimisation: $\lambda = 1e-5$; batch size is $10$; learning rate is $3e-4$; total number of epochs $400$.
                      \end{itemize}
                 \item vvCV (estimated B/fixed B) 
                      \begin{itemize}
                           \item Sample size: $(40,40)$ from $(\text{Normal}(0,1),    \text{Normal}(0,1))$ for $(f_L, f_H)$.
                           \item Hyper-parameter tuning:  batch size is  $5$ ($10$ in total for  $(f_L, f_H)$); learning rate $0.02$; total number of epochs $15$.
                          \item Base kernel: squared exponential kernel.
                          \item Optimisation: When $B$ is fixed, we set $B_{11} = B_{22} = 0.5, B_{12} = B_{21} = 0.01$; otherwise, $B^{(0)}$ is initialized at the identity matrix $I_2$. $\lambda = 1e-5$; batch size is  $5$ ($10$ in total for $(f_L, f_H)$); learning rate is $3e-4$; total number of epochs $400$.
                        \end{itemize}
          \end{itemize}
\item First-order polynomial kernel
            \begin{itemize}
            \item  \gls{CV}
                     \begin{itemize}
                         \item Sample size: $40$.
                         \item Base kernel: first order polynomial kernel.
                         \item Optimisation: $\lambda = 10^{-5}$; batch size is $10$; learning rate is $3\times 10^{-4}$; total number of epochs $400$.
                   \end{itemize} 
                \item \gls{vvCV} (estimating B/fixed B) 
                    \begin{itemize}
                        \item Sample size:  $(40,40)$ from $(\text{Normal}(0,1),    \text{Normal}(0,1))$ for $(f_L, f_H)$.
                        \item Base kernel: first order polynomial kernel.
                        \item Optimisation: When $B$ is fixed, we set $B_{11} = B_{22} = 0.5, B_{12} = B_{21} = 0.01$; otherwise, $B^{(0)}$ is initialized at the identity matrix $I_2$. $\lambda = 10^{-5}$; batch size is  $5$ ($10$ in total for $(f_L, f_H)$); learning rate is $3\times 10^{-4}$; total number of epochs $400$.
                    \end{itemize}
                
            \end{itemize}

\end{itemize}

The empirical computational cost for all tasks is as follows. Scalar-valued CVs take approximately $2.4$ seconds for either choice of kernels; \gls{vvCV}s with fixed $B$ take around $3.3$ seconds with a squared-exponential kernel or around $3.1$ seconds with a $1$st order polynomial kernel; \gls{vvCV}s with estimated $B$ take around $6.6$ seconds with a squared-exponential kernel or around $6.2$ seconds with a $1$st order polynomial kernel.

\subsection{Experimental Details of the Physical Modelling (Borehole) of Waterflow}

\label{appendix:multifidelity_modelling}

\begin{table*}[t!]
    \centering
        \caption{\emph{Prior Distributions for the inputs of the Borehole function}.}
        \label{tab:distributions_in_Borehole_function}
    \begin{tabular}{ c  c || c c}
    \toprule
        Random variable &  Distributions & Random variable &  Distributions  \\ \hline \hline
         $r_w $ & $\text{Normal}(0.1, 0.0161812^2)$  & $r$   &  $\text{Normal}(100, 0.01)$ \\ \hline
         $T_u$  &  $\text{Normal}(89335, 20)$  & $T_l$ &   $\text{Normal}(89.55, 1)$ \\ \hline
         $H_u$  &   $\text{Normal}(1050, 1)$ & $H_l$ &  $\text{Normal}(760, 1)$  \\ \hline
         $L$    &   $\text{Normal}(1400, 10)$ & $K_w$ &   $\text{Normal}(10950, 30)$ \\ 
    \bottomrule
    \end{tabular}
\end{table*}

\noindent In this section, we provide details on the Borehole example from the main paper, and provide complementary experiments. The distributions with respect to which the integral is taken is an eight-dimensional Gaussian with independent marginals provided in Table \ref{tab:distributions_in_Borehole_function}.
 The low-fidelity model and high-fidelity model of water flow \cite{Xiong2013} is given by,
\begin{talign*}
&f_L(x) = \frac{5T_u(H_u-H_l)}{\log\left(\frac{r}{r_w}\right)\Big(1.5+ \frac{2LT_u}{\log\left(\frac{r}{r_w}\right) r_w^2 K_w} + \frac{T_u}{T_l}\Big)}\\
&f_H(x) = \frac{2\pi T_u (H_u - H_l)}{\log\left(\frac{r}{r_w}\right)\Big(1 + \frac{2LT_u}{\log\left(\frac{r}{r_w}\right) r_w^2 K_w} +\frac{T_u}{T_l} \Big) }.
\end{talign*}
 where $x = (r_w, r, T_u, T_l, H_u, H_l, L, K_w)$.
 
\subsubsection{Experiment in the main text: Balanced vv-CVs}
The number of replications is $100$ for all methods. Details of their implementation is given below:
\begin{itemize}
    \item Base kernel:  Instead of using $k(x, x') = \exp(-\|x - x'\|_2^2/2\nu)$ with $l>0$ which implicitly assumes that the length-scales are identical in all directions, we now allow that each dimension can have its own length-scale. That is,
\begin{talign*}
    k(x, x') := \prod_{j=1}^d k_j(x_j, x'_j) \qquad \text{ where } \qquad k_j(x_j, x'_j) = \exp\left(-\frac{ (x_j-x'_j)_2^2}{2 \nu_j}\right). 
\end{talign*}
Each of the components has its own length-scale $\nu_j > 0$ to be determined.
         \item Since $\pi(x) = \prod_{j=1}^d \pi_j(x_j)$, the score function is $           \nabla_x \log \pi(x) = \left( \frac{\partial \log \pi_1(x)}{\partial x_1}, \ldots,
                      \frac{\partial \log \pi_d(x)}{\partial x_d} \right)^\top.$
     \item Hyper-parameter tuning: batch size $5$ ($10$ in total for $(f_L, f_H)$); learning rate of tuning $0.05$; epochs of tuning $20$.
    \item Optimisation (estimated B/pre-fixing B): When $B$ is fixed, we set $B_{11} = B_{22} = 5e-4, B_{12} = B_{21} = 5e-5$; otherwise, $B^{(0)}$ is initialized at $1e-5 \times I_2$. $\lambda = 1e-5$; batch size $5$ ($10$ in total for $(f_L, f_H)$); learning rate for the cases when sample sizes are $(10, 20, 50, 100, 150)$ are $(0.09, 0.06, 0.012, 0.0035, 0.002)$, respectively.
\end{itemize}

\noindent  The empirical computational cost (measured in seconds) of this example is: when $m=(10,10)$, it takes CF, \gls{vvCV}s with fixed B and \gls{vvCV}s with estimated B around $0.03$, $1.2$ and $2.6$ seconds, respectively; when $m=(20,20)$, it takes CF, \gls{vvCV}s with fixed B and \gls{vvCV}s with estimated B around $0.1$, $3$ and $5$ seconds, respectively;  when $m=(50,50)$, it takes CF, \gls{vvCV}s with fixed B and \gls{vvCV}s with estimated B around $0.7$, $7.5$ and $13.5$ seconds, respectively;  when $m=(100,100)$, it takes CF, \gls{vvCV}s with fixed B and \gls{vvCV}s with estimated B around $2.7$, $ 17$ and $27.6$ seconds, respectively; when $m=(150,150)$, it takes CF, \gls{vvCV}s with fixed B and \gls{vvCV}s with estimated B around $6$, $29$ and $49$ seconds, respectively.

\subsubsection{Additional Experiment: Unbalanced Data-sets for Physical Modelling of Waterflow}
\label{appendix:unbalanced_vvCV_borehole}
In Figure \ref{fig:vvCV_unbalanced_cases_Borehole}, we present the results of vv-CVs when the sample sizes are unbalanced; that is, we have a different number of samples for the low-fidelity and high-fidelity models. The exact setup is given below, and we replicated the experiment $100$ times.
\begin{itemize}
    \item Sample size: $m_H$ is fixed to be $20$, while $m_L \in \{20, 40, 60\}$.
    \item Base kernel: product of squared exponential kernels. $k(x, x') := \prod_{j=1}^d k_j(x_j, x'_j)$, where each $k_j(x_j, x'_j) = \exp(-(x_j-x'_j)_2^2/ 2 \nu_j)$ has its own length-scale $\nu_j > 0$ to be determined.
    \item Hyperparameter tuning: batch size of tuning $5$ ($10$ in total for $(f_L, f_H)$); learning rate of tuning is $0.05$; epochs of tuning is $20$.
    \item Optimisation (estimated B/pre-fixing B): When $B$ is fixed, we set $B_{11} = B_{22} = 5\times 10^{-4}, B_{12} = B_{21} = 5\times 10^{-5}$; otherwise, $B^{(0)}$ is initialized at $10^{-5} \times I_2$. $\lambda = 10^{-5}$; learning rate is $(0.06, 0.04, 0.02)$ when $m_L \in \{20, 40, 60\}$, respectively.
\end{itemize}

Interestingly, we notice that not much is gained when increasing the number of samples for the low-fidelity model. In fact, in the case of a fixed $B$, the performance tends to decrease with a larger $m_L$. This is likely due to the ``negative transfer'' phenomenon which is well-known in machine learning. This phenomenon can occur when two tasks are not similar enough to provide any gains in accuracy. In this case, there is clearly no advantage in using a larger $m_L$ since this increases computational cost and does not provide any gains in accuracy.

\begin{figure}
    \centering
    \includegraphics[width=0.7\linewidth]{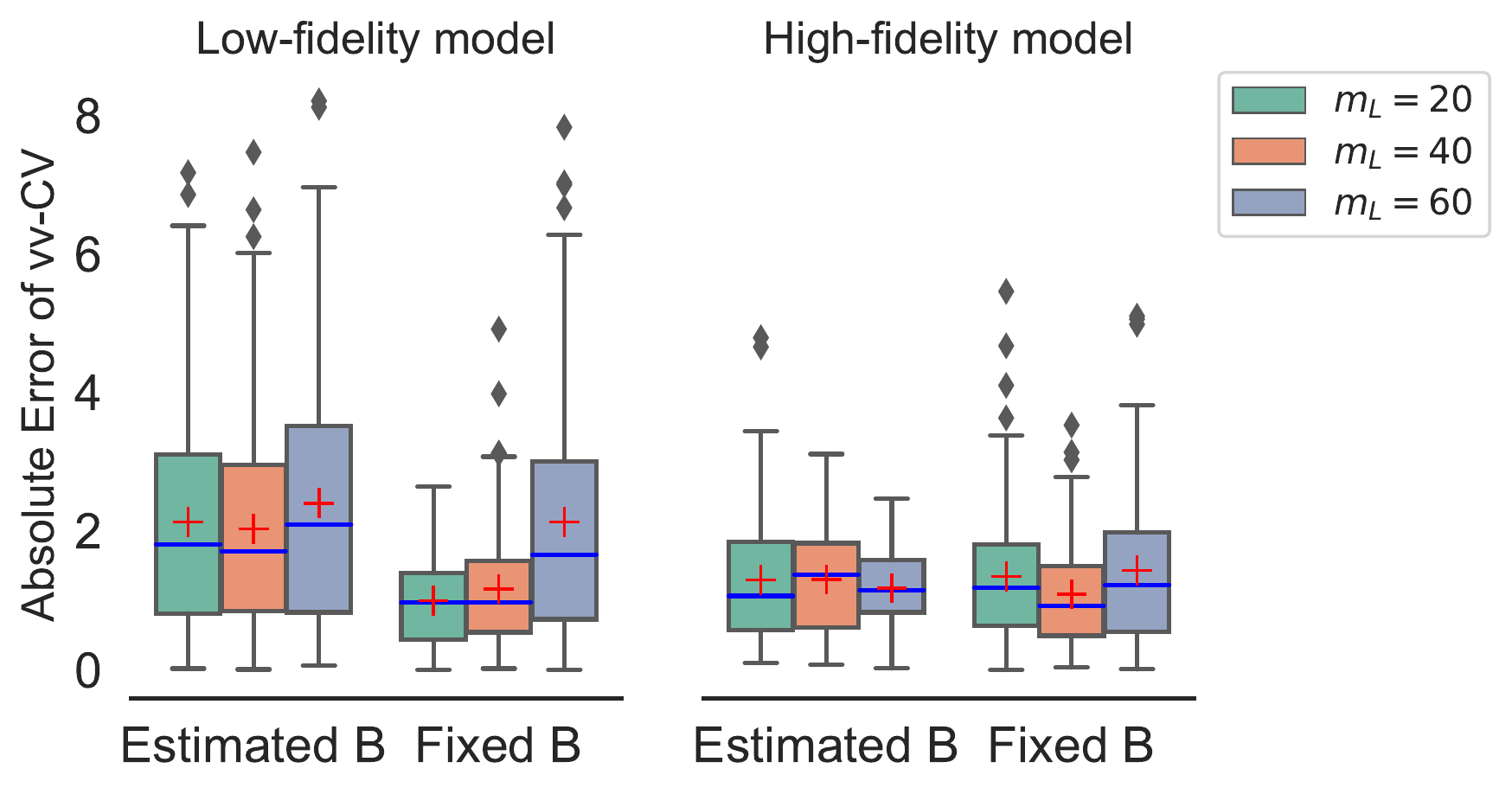}
    \caption{Performance of vv-CVs with unbalanced sample sizes. Here we fix $m_H = 20$, and changing $m_L$ to be $20, 40, 60$. Each experiment is repeated 100 times. }
    \label{fig:vvCV_unbalanced_cases_Borehole}
\end{figure}

\subsection{Experimental Details of the Computation of the Model Evidence through Thermodynamic Integration}\label{appendix:TI}

To implement our \gls{vvCV}s, we need to derive the corresponding score functions.
For a power posterior, the score function is of the form:
  \begin{talign*}
        \nabla_\theta \log p(\theta|y,t) =  t \nabla_\theta \log p(y|\theta) + \nabla_\theta \log p(\theta)
  \end{talign*}
where $\nabla_\theta \log p(\theta)$ is the score function corresponding to the prior. In our case, the prior is a log-normal distribution $\log \theta \sim \mathcal{N}(\mu, \sigma^2)$ (where $\sigma=0.25$), and its score function is given by:
\begin{talign*}
   \nabla_\theta  \log p(\theta) &= -\frac{1}{\theta} - \frac{\log \theta - \mu}{x \sigma^2}.
\end{talign*} 
The score functions for all temperatures are plotted in \Cref{fig:temperature_ladder}; as observed, temperatures consecutive score functions are very similar to one another.

In order to keep the computational cost manageable, we split the $T=62$ integration problems into groups of closely related problems. In particular, we jointly estimate the means in terms of $4$ consecutive temperatures on the ladder ( group 1 is $\mu_1, \mu_2, \mu_3, \mu_4$, group 2 is $\mu_5, \mu_6, \mu_7, \mu_8$, etc...). Since $31$ is not divisible by $4$, our last group consists of three means $\mu_{29}, \mu_{30}, \mu_{31}$. Then, the same approach is taken to create groups of 4 (or 3 for the last group) variances. 

The number of replications was $20$ for each method. Details are given below:
\begin{itemize}
    \item CV
          \begin{itemize}
                   \item[*] Base kernel: Preconditioned squared-exponential kernel \citep{oates2017_CF_for_MonteCarloIntegration}.
                   \item[*] Hyperparameter tuning: we use the values $(0.1, 3)$ in \citep{oates2017_CF_for_MonteCarloIntegration}.
                  \item[*] Optimisation: $\lambda =10^{-3}$; batch size is $5$; total number of epochs is $400$.
          \end{itemize}
    \item vv-CV(estimated B)
          \begin{itemize}
              \item[*] Base kernel: Preconditioned squared-exponential kernel \citep{oates2017_CF_for_MonteCarloIntegration}.
               \item[*] Hyperparameter tuning: we use the values $(0.1, 3)$ in \citep{oates2017_CF_for_MonteCarloIntegration}.
               \item[*] Optimisation: $\lambda = 10^{-3}$; batch size is $5$; learning rate is $0.01$; number of epochs is $400$.
          \end{itemize}
\end{itemize}

\begin{figure}
    \centering
    \includegraphics[width=0.8\linewidth]{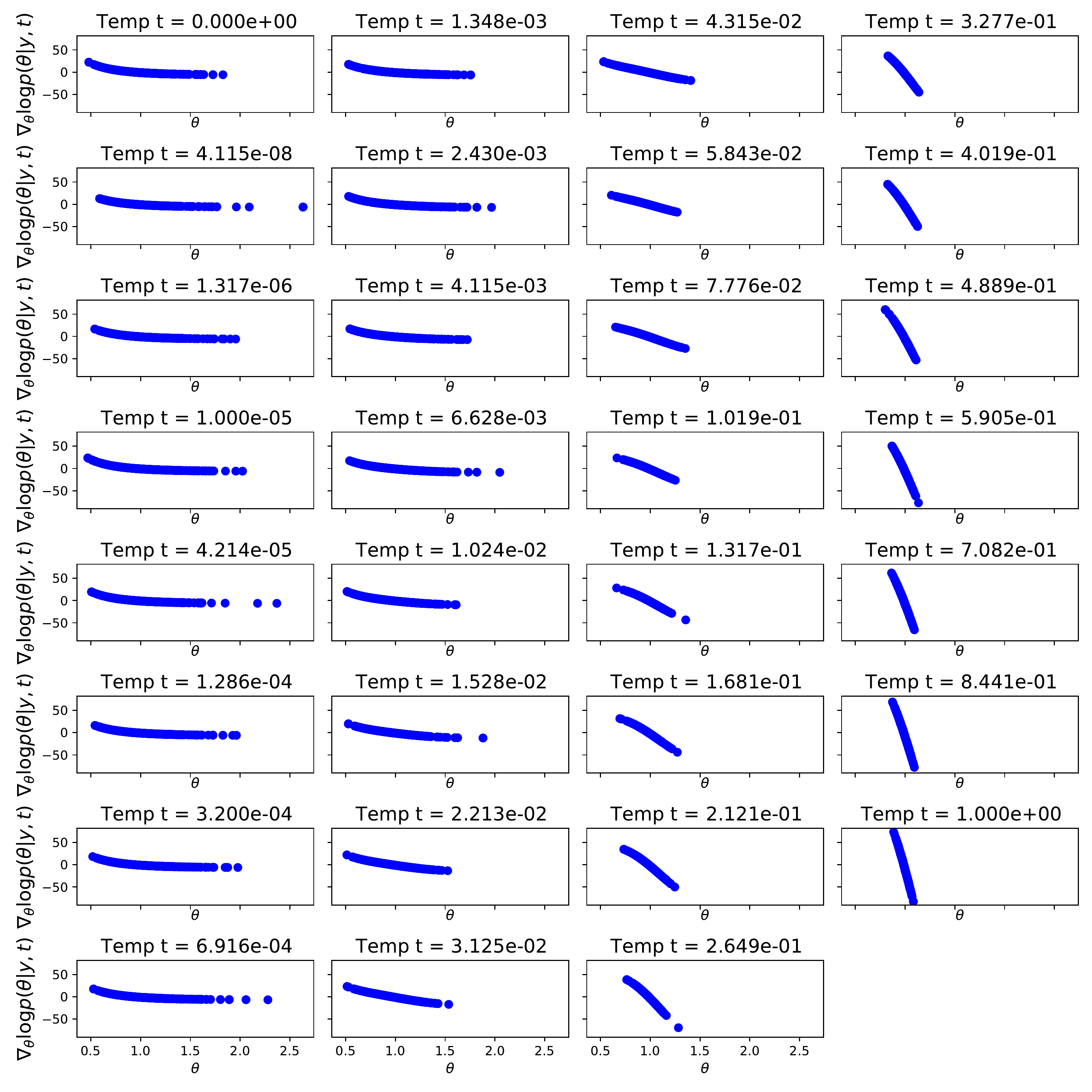}
    \caption{Score functions corresponding to the power posteriors at different temperatures on the temperature ladder.}
    \label{fig:temperature_ladder}
\end{figure}

\subsection{Experimental Details of the Lotka-Volterra System}
\label{appendix:Lotka}
We implement $\log$-$\exp$ transform on model parameters and avoid constrained parameters on the ODE directly. Lotka—Volterra system can be re-parameterized as,
\begin{talign*}
    \frac{dv_1(s)}{ds} &= \tilde{\alpha} v_1(s)- \tilde{\beta} v_1(s) v_2(s)\\
    \frac{dv_2(s)}{ds} &= \tilde{\delta} v_1(s) v_2(s)- \tilde{\gamma} v_2(s),
\end{talign*}
where 
\begin{talign*}
    &\tilde{\alpha} =\exp(\alpha), \tilde{\beta}  = \exp(\beta) , \\
    &\tilde{\delta} = \exp(\delta),  \tilde{\gamma} = \exp(\gamma),
\end{talign*}
where $v_1$ and $v_2$ represents the number of preys and predators, respectively.

\noindent The model is,
\begin{talign*}
    y_{10} \sim \text{Log-Normal}(\log \tilde{v}_1(0), \tilde{\sigma}_{y_1})\\
    y_{20} \sim \text{Log-Normal}(\log \tilde{v}_2(0), \tilde{\sigma}_{y_2}) \\
    y_{1s} \sim \text{Log-Normal}(\log v_1(s), \tilde{\sigma}_{y_1}) \\
    y_{2s} \sim \text{Log-Normal}(\log v_2(s), \tilde{\sigma}_{y_2}) 
\end{talign*}
where 
\begin{talign*}
    &\tilde{v}_1(0): = \exp(v_1(0)), \tilde{v}_2(0): = v_2(0)\\
    &\tilde{\sigma}_{y_1}:=\exp(\sigma_{y_1}), \tilde{\sigma}_{y_2}=\exp(\sigma_{y_2}). 
\end{talign*}

\noindent By doing so, $x :=(\alpha, \beta, \delta, \gamma, v_1(0), v_2(0), \sigma_x, \sigma_y)^\top$ can be defined on the whole $\R^8$ as the exponential transformation will make sure they larger than zero, and thus can be assigned priors on $\R^8$, e.g., Gaussian. As a result, the expectations asscoiated with $\pi(x)$ are defined on $\R^8$ and Stan will return the scores of these parameters directly as these 8 parameters $x$ themselves are unconstrained through manually reparameterisation.

\noindent Priors are,
\begin{talign*}
  \alpha , \gamma &\sim \text{Normal}(0, 0.5^2) \\
  \beta , \delta &\sim \text{Normal}(-3, 0.5^2) \\
  \sigma_x, \sigma_y &\sim \text{Normal}(-1,1^2)\\
  v_1(0), v_2(0) &\sim \text{Normal}(\log 10,1^2)
\end{talign*}

\noindent The fitting for predators $y_{1s}$ and $v_1(s)$ at points $s_1, \ldots, s_m$ are shown in \Cref{fig:lotka_fig}. The fitting for predators $y_{2s}$ and $v_2(s)$ at points $s_1, \ldots, s_m$ are shown in \Cref{fig:lotka_fig_appendix}.

\begin{figure}[ht!]
\vskip 0.2in
\begin{center}
\centerline{\includegraphics[width=0.7\columnwidth]{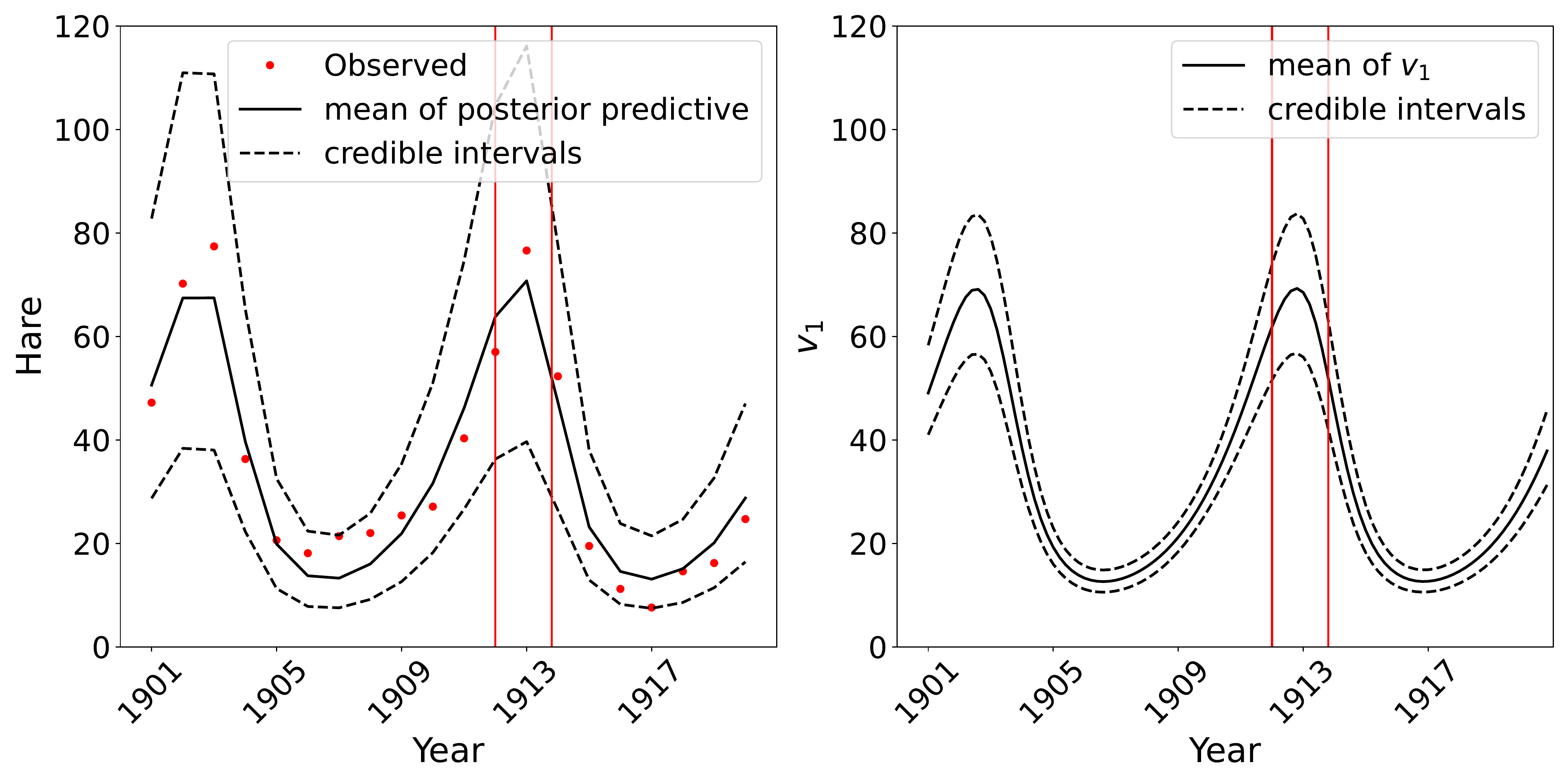}}
\caption{\emph{Bayesian inference of abundance of preys of Lotka-Volterra system.} Dots are observations; lines are the posterior means while dotted lines are the corresponding $95\%$ credible intervals. Tasks are chosen in the area between the two vertical red lines.}
   \label{fig:lotka_fig}
\end{center}
\vskip -0.2in
\end{figure}

\begin{figure}[ht!]
    \centering
    \includegraphics[width=0.7\columnwidth]{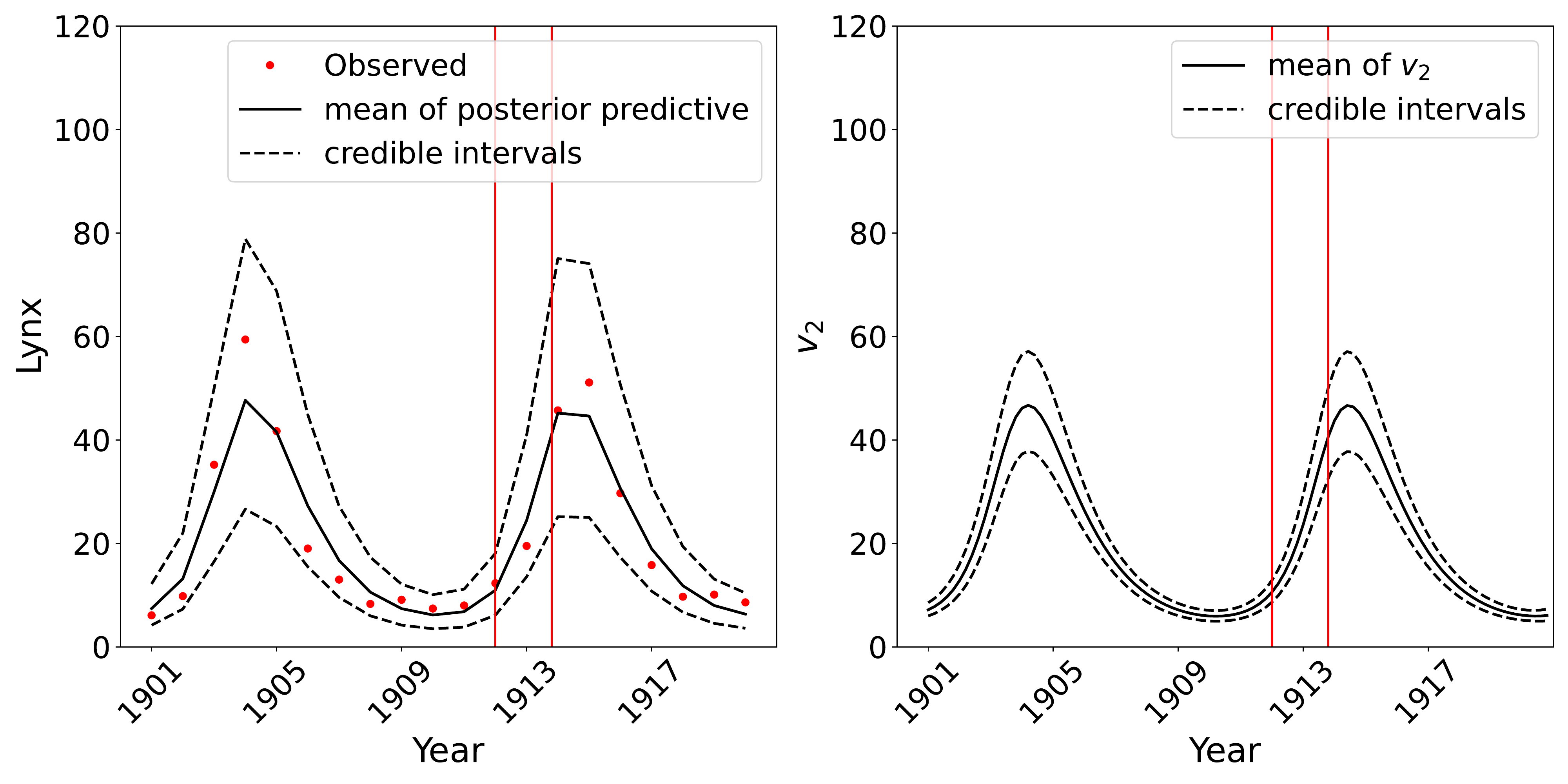}
    \caption{\emph{Bayesian inference of abundance of predators of Lotka-Volterra system.} Dots are observations; lines are the posterior means while dotted lines are the corresponding $95\%$ credible intervals. Tasks are chosen in the area between the two vertical red lines.}
    \label{fig:lotka_fig_appendix}
\end{figure}

\noindent The number of replications is $10$ for each method and for each task. Details are given below:
\begin{itemize}
    \item Tasks $\Pi[f_t]$ at time $s_1', \ldots, s_T'$ (the base unit is $1$ year):
    \begin{itemize}
        \item[*] $T=2$:  $1913. , 1913.2$;
        \item[*] $T=5$: $1912. , 1912.2, 1912.4, 1912.6, 1912.8$;
        \item[*] $T=10$: $1912. , 1912.2, 1912.4, 1912.6, 1912.8, 1913. , 1913.2, 1913.4, 1913.6, 1913.8$.
    \end{itemize}
    \item Base kernel: same one as in \ref{appendix:multifidelity_modelling}: product of squared-exponential kernels. 
    \item Hyperparameter tuning: batch size is $10$; learning rate $0.01$; total number of epochs $10$.
    \item Optimisation: $\lambda= 10^{-5}$; batch size $10$; learning rate $10^{-3}$; total number of epochs $400$.
\end{itemize}


\noindent The empirical computational cost (measured in seconds) of this example is: when $T=2$, it takes CF, scalar-valued CVs, \gls{vvCV}s with fixed B and \gls{vvCV}s with estimated B around $67$, $80$, $150$ and $159$ seconds respectively for all $T$ tasks; when $T=5$, it takes CF, scalar-valued CVs, \gls{vvCV}s with fixed B and \gls{vvCV}s with estimated B around $170$, $200$, $854$ and $882$ seconds respectively for all $T$ tasks; when $T=10$, it takes CF, scalar-valued CVs, \gls{vvCV}s with fixed B and \gls{vvCV}s with estimated B around $340$, $400$, $3435$ and $3469$ seconds respectively for all $T$ tasks.

\subsubsection{Addition Experiments for Bayesian inference of abundance of preys of Lotka-Volterra system}
We present additional experiments in \Cref{tab:Lotka1_additional} under the same settings as those in the above section. We consider to estimate $\Pi[f_t]$ at time $s_1', \ldots, s_T'$ (the base unit is $1$ year),
\begin{itemize}
    \item $T=2$: $1915., 1915.2$;
    \item $T=5$: $1915. , 1915.2, 1915.4, 1915.6, 1915.8$;
    \item $T=10$: $1914. , 1914.2, 1914.4, 1914.6, 1914.8, 1915. , 1915.2, 1915.4, 1915.6, 1915.8$.
\end{itemize}

\begin{table*}[ht!]
    \centering
    \caption{\emph{Additional Experiments for the Bayesian inference of the Lotka-Volterra system: Sum of mean absolute error of each task.}}
    \label{tab:Lotka1_additional}
    \begin{tabular}{c  c|| cc  cc cc cc}
    \toprule
      $T$  & $m$ &  \gls{vvCV}- Estimated B  & \gls{vvCV}-Fixed B   & CF & MC\\ \hline \hline
        2& $500$ & 0.169  & 0.144  & 0.242  &  0.275   \\        \hline
        5 & $500$  &  0.322 & 0.245  & 1.246 &  0.661    \\ \hline
        10 & $500$  &  0.916 & 0.792 & 5.835 & 1.797  \\
    \bottomrule
    \end{tabular}
\end{table*}



\end{document}